\documentclass[12pt,a4paper]{article} 

\usepackage[utf8]{inputenc}
\usepackage[english]{babel}
\usepackage[T1]{fontenc}
\usepackage{dirtytalk} 
\usepackage{ragged2e}
\usepackage{geometry}
\usepackage[table]{xcolor}
\definecolor{olive}{rgb}{0.3, 0.4, .1}
\definecolor{fore}{RGB}{249,242,215}
\definecolor{back}{RGB}{51,51,51}
\definecolor{title}{RGB}{255,0,90}
\definecolor{blackViolet}{RGB}{138,43,226}
\definecolor{gold}{rgb}{1.,0.84,0.}
\definecolor{JungleGreen}{cmyk}{0.99,0,0.52,0}
\definecolor{blackGreen}{cmyk}{0.85,0,0.33,0}
\definecolor{RawSienna}{cmyk}{0,0.72,1,0.45}
\definecolor{Magenta}{cmyk}{0,1,0,0}
\definecolor{wood}{RGB}{139,115,85}
\definecolor{dorange}{RGB}{255,127,0}
\definecolor{dolive}{RGB}{85,107,47}
\definecolor{drg}{RGB}{255,165,0}
\definecolor{dgreen}{rgb}{0,.5,0}
\definecolor{dblue}{rgb}{0,0,.5}
\definecolor{dred}{rgb}{0.5,0,.5}

\newcommand{\tcr}{\textcolor{red}}
\newcommand{\tcb}{\textcolor{blue}}
 
\usepackage{graphicx}% Include figure files
\usepackage{rotating}
\usepackage{tikz}
\usepackage{gnuplottex}
\usepackage{float}

\usepackage{xspace}

\usepackage{titlesec}
\titlespacing*{\section}{0pt}{2ex}{2ex}
\titlespacing*{\subsection}{0pt}{2ex}{2ex} 
\titlespacing*{\subsubsection}{0pt}{2ex}{2ex}

\titleformat*{\section}{\large\bfseries}
\titleformat*{\subsection}{\large\bfseries}
\titleformat*{\subsubsection}{\large\bfseries}
\titleformat*{\paragraph}{\large\bfseries}
\titleformat*{\subparagraph}{\large\bfseries}

\usepackage[colorlinks,citecolor=blue,urlcolor=blue,linkcolor=blue]{hyperref}
\usepackage{cite}

\usepackage{colortbl}
\usepackage{dcolumn}% Align table columns on decimal point
\usepackage{multirow}
\usepackage{longtable}
\usepackage{arydshln}
\usepackage{array}
\usepackage{tabularx, booktabs}
\newcolumntype{Y}{>{\centering\arraybackslash}X}

\usepackage{mathtools}
\usepackage{amsmath}
\usepackage{amsfonts}
\usepackage{amssymb} 
\usepackage{amsthm}
\usepackage{mathrsfs}
\usepackage{dsfont}
\usepackage{bm}% bold math
\usepackage{bbm}
\DeclareMathAlphabet{\mathpzc}{OT1}{pzc}{m}{it}

\usepackage{stmaryrd}

\makeatletter
\newcommand\xLongLeftRightArrow[2][]{%
  \ext@arrow 0099{\LongLeftRightArrowfill@}{#1}{#2}}
\def\LongLeftRightArrowfill@{%
  \arrowfill@\Leftarrow\Relbar\Rightarrow}
\makeatother

\usepackage{simplewick}
\makeatletter
\ExplSyntaxOn
\RenewDocumentCommand{\acontraction}{ O{1ex} O{black} m m m m}{%
  \tl_if_empty:nTF{#1}{\def\wickoffset{1ex}}{\def\wickoffset{#1}}%
  {\color{#2}%
  \mathchoice
    {\acontraction@\displaystyle{#3}{#4}{#5}{#6}{\wickoffset}}%
    {\acontraction@\textstyle{#3}{#4}{#5}{#6}{\wickoffset}}%
    {\acontraction@\scriptstyle{#3}{#4}{#5}{#6}{\wickoffset}}%
    {\acontraction@\scriptscriptstyle{#3}{#4}{#5}{#6}{\wickoffset}}}}%
\RenewDocumentCommand{\bcontraction}{ O{1ex} O{black} m m m m}{%
  \tl_if_empty:nTF{#1}{\def\wickoffset{1ex}}{\def\wickoffset{#1}}%
  {\color{#2}%
  \mathchoice
    {\bcontraction@\displaystyle{#3}{#4}{#5}{#6}{\wickoffset}}%
    {\bcontraction@\textstyle{#3}{#4}{#5}{#6}{\wickoffset}}%
    {\bcontraction@\scriptstyle{#3}{#4}{#5}{#6}{\wickoffset}}%
    {\bcontraction@\scriptscriptstyle{#3}{#4}{#5}{#6}{\wickoffset}}}}%
\ExplSyntaxOff
\makeatletter

\newcommand\reallywidehat[1]{%
\savestack{\tmpbox}{\stretchto{%
  \scaleto{%
    \scalerel*[\widthof{\ensuremath{#1}}]{\kern-.6pt\bigwedge\kern-.6pt}%
    {\rule[-\textdepth/2]{1ex}{\textdepth}}%WIDTH-LIMITED BIG WEDGE
  }{\textdepth}% 
}{0.5ex}}%
\stackon[1pt]{#1}{\tmpbox}%
}

\usepackage{physics} 
\usepackage{braket}

\newtheorem{proposition}{Proposition}[section]
\newtheorem{definition}{Definition}[section]

\newtheorem{example}{Example}[section]
\newtheorem{remark}{Remark}[section]

\usepackage{algorithm}
\usepackage[noend]{algpseudocode}
\usepackage{listings}
\usepackage{diagbox}

\if False
\documentclass[multi=my,crop]{standalone}

\usepackage{mathtools}
\usepackage[utf8]{inputenc}
\usepackage[french]{babel}
\usepackage[T1]{fontenc}
\usepackage{amsmath}
\usepackage{amsfonts}
\usepackage{bm}
\usepackage{stmaryrd}
\usepackage{amssymb}
\usepackage{mathrsfs}
\usepackage{bm}
\usepackage{graphicx}
\usepackage{ragged2e}
\usepackage{fancybox}
\usepackage{bbm}

\usepackage{tikz}
\usetikzlibrary{backgrounds}
\usetikzlibrary{positioning}

\pgfdeclarelayer{bg}    % declare background layer
\pgfsetlayers{bg,main}
\usepackage{colortbl}
\usepackage{multimedia}
\usepackage{xcolor}
\usepackage{braket}

\usepackage{amsfonts}
\usepackage{ifthen}

\fi

\newcommand{\padaga}[5]{
	\def\x{#1}
	\def\e{.015}%demi de l'épaisseur du trait
	\def\r{.5}
	
	\ifthenelse{\equal{\detokenize{#2}}{\detokenize{occ}}}
		{ \draw[fill] (\x,-\e) arc (270:90:\e+\r) node[above] {\Large $\hat{a}^\dag_{#3}$}
			-- (\x,1+\e) arc (90:270:\r) 
			-- cycle; }
		{ \ifthenelse{\equal{\detokenize{#2}}{\detokenize{vir}}}	
			{ \draw[fill] (\x,\e)
				-- (\x-2*\e,\e)
				-- (\x-2*\e,\e)
				-- (\x-2*\e,-\e) arc (0:-90:2*\r) node[below] {\Large $\hat{a}^\dag_{#3}$}
				-- (\x-2*\r,-2*\r-\e) arc (-90:0:2*\r) --cycle;}
			{ \draw (\x-1,0) node[above] {\Large $\hat{a}^\dag_{#3}$};
			  \def\x{#1-1}}
		}
	
	\draw[fill] (\x,\e) 
		-- (\x+2,\e) 
		-- (\x+2,-\e) 
		-- (\x,-\e) 
		-- cycle;
	
	\ifthenelse{\equal{\detokenize{#4}}{\detokenize{occ}}}
		{ \draw[fill] (\x+2,\e) arc (-90:90:\r)  node[above] {\Large $\hat{a}_{#5}$}
			-- (\x+2,1+\e) arc (90:-90:\e+\r) 
			-- cycle; ;}
		{\ifthenelse{\equal{\detokenize{#4}}{\detokenize{vir}}}
			{\draw[fill] (\x+2,\e) 
			-- (\x+2*\e+2,\e) 
			-- (\x+2*\e+2,\e)
			-- (\x+2*\e+2,-\e) arc (180:270:2*\r) node[below] {\Large $\hat{a}_{#5}$}
			-- (\x+2*\r+2,-2*\r-\e) arc (270:180:2*\r) --cycle;}
			{\draw (\x+2,0) node[above] {\Large $\hat{a}_{#5}$};}
		}	
}

\newcommand{\adaga}[4]{%one-body operator
  \ifthenelse{\equal{\detokenize{#1}}{\detokenize{exci}}}
   { \draw[line width=1pt] (0,-1) arc(-90:0:1);
    \tikzset{shift={(1,0)}} }{}
  \ifthenelse{\equal{\detokenize{#1}}{\detokenize{deex}}}
   { \draw[line width=1pt] (0,1) arc(90:270:.5) -- (1,0);
    \tikzset{shift={(1,0)}} }{}

   \draw[line width=1pt] (0,0)node[above]{\Large$\hat{a}^\dag_{#2}$} -- (1.5,0)node[above]{\Large$\hat{a}_{#4}$};
   \tikzset{shift={(1.5,0)}}

  \ifthenelse{\equal{\detokenize{#3}}{\detokenize{deex}}}
   { \draw[line width=1pt] (0,0) arc(180:270:1);
    \tikzset{shift={(1,0)}} }{}
  \ifthenelse{\equal{\detokenize{#3}}{\detokenize{exci}}}
   { \draw[line width=1pt] (0,0) -- (1,0) arc(-90:90:.5);
    \tikzset{shift={(1,0)}} }{}

}

\if False

\usetikzlibrary{decorations.markings}
\usetikzlibrary { decorations.pathmorphing, decorations.pathreplacing, decorations.shapes, }
\usepackage{pgfplots}

\usetikzlibrary{decorations.pathmorphing}
\usetikzlibrary{decorations.markings}
\usetikzlibrary{arrows.meta,bending}

\tikzset{grid/.style={step=1cm,gray,very thin}}

\fi

\makeatletter
\newcommand\footnoteref[1]{\protected@xdef\@thefnmark{\ref{#1}}\@footnotemark}
\makeatother

\begin{document} 

\title{\normalsize{\textbf{Dyck language and fermionic second quantization}} \\ \normalsize{\textbf{I. Theory}}}

\date{} 

\maketitle

\vspace*{-2cm}

\noindent \begin{center}
\textit{by} Jérémy Morere\footnote{\label{note1}\textit{Université de Lorraine,} CNRS, LPCT, \textit{F-54000 Nancy, France}}{}$^,$\footnote{jeremy.morere@univ-lorraine.fr} \textit{and} Thibaud Etienne\footnoteref{note1}{}$^,$\footnote{thibaud.etienne@univ-lorraine.fr}\\
\end{center}

$\;$

\begin{abstract}
\noindent This paper proposes a novel framework connecting fermionic second quantization and Dyck languages.
By defining translations of creation and annihilation operators using bracket alphabets, the study establishes nullity criteria for expectation values of chains of second quantization operators. Those translations are designed to reveal sufficient conditions for the nullity of the expectation values relatively to one-determinant states or to the physical vacuum. The nullity criteria are purely syntactic, and simply reduce to the inspection of sequences of opening and closing brackets.
Moreover, numbers and transformations in Dyck languages can be imported in the context of fermionic second quantization. One of these numbers, the depth, originally absent from second quantization, can be used to introduce more nullity criteria.
\\ $\;$ \\ 
\textit{Keywords: Fermionic second quantization, Dyck language, Diagrammatic representation.}
\end{abstract}

\section{Introduction}

The formalism of second quantization \cite{dirac_quantum_1927,szabo_modern_2012} plays a central role in the theoretical description of quantum many-body systems, particularly in condensed matter physics and quantum chemistry. By expressing physical observables and quantum states in terms of creation and annihilation operators living in the Fock space, second quantization provides a compact and consistent framework to treat bosonic or fermionic systems.
This formalism is present in a wide range of electronic structure methods, including Configuration Interaction (CI) \cite{sherrillt_configuration_nodate,helgaker_molecular_2014, szabo_modern_2012}, Coupled Cluster (CC) theory \cite{helgaker_molecular_2014, cizek_correlation_1966, shavitt_many-body_2009}, and many-body perturbation theory such as Møller--Plesset\cite{helgaker_molecular_2014,cremer_mollerplesset_2011,moller_note_1934}.

The expectation value of a chain of fermionic creation and annihilation operators relatively to the physical vacuum or relatively to a one-determinant state is either zero, one, or minus one. This is true regardless of the sequence of second quantization operators of the chain.
 
In most of the practical post-Hartree Fock methods, a truncation of the method's central expansion(s) is introduced based solely on the excitation rank. In that context, all the chains of second quantization operators involved are known \textit{a priori} and can be pre-computed in order to save computation time.
Another approach is to introduce an adaptive truncation, like in selected-CC \cite{evangelista_adaptive_2014, lyakh_adaptive_2010}.
The set of determinant states used in these approaches is updated at each iteration. A recurring difficulty in these approaches is the impossibility to predict which expectation value will have to be computed on-the-fly.
Therefore, a lot of computation time is actually spent on the evaluation of expectation values that are equal to zero.

Symbolically determining the expectation value of chains of fermionic second quantization operators becomes increasingly challenging as the length of the operator chain grows, particularly in methods involving nested commutators such as CC theory. The number of terms generated during such evaluations grows combinatorially, leading to substantial computational overhead.
One traditional solution to this problem relies on Wick’s theorem \cite{wick_evaluation_1950,lindgren_atomic_1986}, which rewrites operator chains in terms of normal-ordered chains and contractions. While Wick’s theorem transforms the problem into a combinatorial enumeration of pairings, it still requires identifying all valid contractions, a task whose complexity grows rapidly with the chain size.

In parallel, combinatorics have provided valuable insights into the structure of normal-ordering. Notably, the work of Blasiak and Flajolet \cite{blasiak_combinatorial_2011} introduced a combinatorial interpretation of creation and annihilation operators through graph-based representations, highlighting the deep connections between operator ordering and combinatorial structures. 
These observations motivate the search for purely syntactic and combinatorial sufficient conditions for the nullity of the expectation value of a chain of fermionic second quantization operators without explicitly applying the operators to a state, or enumerating contractions. Such a criterion would significantly reduce computational complexity.

Using a diagrammatic representation in order to make the manipulation of second quantization easier is not new. Since the introduction of the Feynam diagram \cite{feynman_qed_2006,mattuck_guide_1992}, diverse diagrammatic representations of problems involving second quantization operators are used. However, each diagram is relevant for a specific method such as Goldstone\cite{goldstone_derivation_1957,mattuck_guide_1992,szabo_modern_2012} and Hugenholtz\cite{hugenholtz_perturbation_1957,mattuck_guide_1992,szabo_modern_2012} diagrams for general perturbation theory, or coupled-cluster diagrams \cite{shavitt_many-body_2009} for CC.
Despite their effectiveness, these diagrammatic approaches are typically tailored to specific methods and do not provide a universal, method-independent criterion for determining whether a given operator chain yields a non-zero expectation value.
Moreover, the existence of sufficient conditions for the nullity of the expectation value of a chain of fermionic second quantization operators is not new either. The Slater--Condon rules \cite{slater_theory_1929,condon_theory_1930} are examples of such criteria.

In this work, we first intend to build a one-dimensional representation of any chain of fermionic second quantization operators which can reveal sufficient conditions for the nullity of the expectation value of the chain relatively to some reference states.
We address this question by establishing a connection between fermionic operator chains and Dyck language\cite{stanley_catalan_2015,roman_introduction_2015}, a well-known object in combinatorics. 
Dyck words are balanced sequences of brackets that are counted by Catalan numbers \cite{stanley_catalan_2015,roman_introduction_2015} and appear in a wide range of mathematical contexts. In particular, Catalan numbers can be found in diverse papers related to second quantization \cite{accardi_non--commutative_1994,anshelevich_partition-dependent_2001,arienzo_bosonic_2025,dumitrescu_invitation_2017,effros_feynman_2003,ginot_large_2022,jackson_robust_2017}

The present article comes together with some work reported in a companion paper (hereafter denoted by ``Part II'') which shows how our new framework can be extended and used for an efficient application of a corollary to the static-fermionic Wick theorem, or for dealing with (nested) commutators, together with some numbering and some algorithmic considerations.

This first article is organized as follows: After a brief introduction to the fermionic second quantization formalism in Section II, we present nullity criteria for the expectation value of chains of fermionic second quantization operators based on the sequence of creation and annihilation operators. 

In Section III, we introduce what is a Dyck word in a Dyck language, how it can be described and manipulated. We will highlight a specific transformation called \textit{expulsion}, this transformation will be studied in detail in Part II. 

Section IV connects fermionic second quantization operators and Dyck words, by defining different ``translations'' of fermionic second quantization operators into different bracket-based alphabets. Depending on the context, we explain which translation is relevant for revealing sufficient conditions for the nullity of the expectation value of the operator chain from the sequence of opening and closing brackets in its translation. In this first part, there is a loss of information between the chain of fermionic second quantization and its corresponding translation. In Part II we recover the missing information.

The present paper closes with some perspectives about bosonic second quantization, and about the representation of transformations which do not necessarily conserve the number of particles, using an alternative diagrammatic representation based on so-called Dyck paths. This last perspective could be of interest in the framework of charged excitation calculations.

\section{Basics of fermionic second quantization}
\noindent We provide here a brief introduction to fermionic second quantization in orthonormal basis. For more details we strongly advise the reader to consult References \cite{szabo_modern_2012, surjan_second_1989, lindgren_atomic_1986}.

Since in Part I and in Part II we are concerned with \textit{fermionic} second quantization, when there is no explicit specification it is assumed that ``second quantization operator'' implicitely points at \textit{fermionic} second quantization operators.

\subsection{Definitions, notations and labelling}

We consider an $L$-dimensional orthonormal ordered basis $\mathcal{B} \coloneqq(\varphi_r)_{ 1\leq r \leq L }$ of one-particle wave functions, called spin-orbitals.

Let $N$ be a natural integer lower than $L$. Let $(\varphi_{i_j})_{1\leq j\leq N}$ be a sequence of $N$ elements of $\mathcal{B}$. The one-determinant state made of these elements in that order is defined as
$$\ket{\varphi_{i_1}\cdots \varphi_{i_N}} \coloneqq \bigwedge_{j=1}^N \varphi_{i_j},$$
i.e., the totally antisymmetrized tensor product of elements of the chosen sequence. Notice that $\ket{\varphi_{i_1}\cdots \varphi_{i_N}}$ is readily normalized to unity. The physical vacuum state, denoted $\ket{\;}$, describes no electron but is normalized to unity. The zero-electron Fock space, $\mathcal{F}_0$, is defined as
\begin{equation*}
\mathcal{F}_0 \coloneqq \left\lbrace \lambda\ket{\;}\,:\, \lambda\in\mathbb{C}\right\rbrace.
\end{equation*}
The zero of $\mathcal{F}_0$, denoted by $\ket{0_0}$, is simply $0\ket{\;}$. We also define $\mathcal{F}_1$ as
\begin{equation*}
\mathcal{F}_1 \coloneqq \mathrm{span}_{\mathbb{C}}(\mathcal{B}) = \left\lbrace\sum_{i=1}^L\lambda_i\varphi_i \, : \, \forall i \in \llbracket 1, L \rrbracket, \, \lambda_i\in\mathbb{C}\right\rbrace
\end{equation*}
where $\llbracket 1, L\rrbracket$ is the natural integer interval starting with 1 and ending with $L$, including both 1 and $L$. The zero of $\mathcal{F}_1$ is denoted $\ket{0_1}$. For every $k$ in $\llbracket 2, L\rrbracket$, the $k$--electron Fock space, $\mathcal{F}_k$, is defined as
\begin{equation*}
\mathcal{F}_k\coloneqq \mathrm{span}_{\mathbb{C}}\left\lbrace \bigwedge_{j=1}^k \varphi_{i_j} \, : \, 1 \leq i_1 \leq \cdots \leq i_k\leq L\right\rbrace
\end{equation*}
with its zero being denoted below by $\ket{0_k}$.

To each spin-orbital $\varphi_r$ we associate a creation operator $\hat{a}^\dag_r$, which creates an electron described by $\varphi_r$ when acting on any determinant state:
\begin{equation*}
\hat{a}^\dag_r \ket{\varphi_{i_1} \cdots  \varphi_{i_N}} = \ket{\varphi_r \varphi_{i_1} \cdots \varphi_{i_N}}.
\end{equation*} 
If $\varphi_r$ is already present in the determinant state, the resulting state is the zero of the $(N+1)$--electron Fock space, i.e., $\ket{0_{N+1}}$
\begin{equation} \label{eq:zero_creation}
r\in(i_1,\ldots,i_N)\,\Longrightarrow \,\hat{a}_r^\dag \ket{\varphi_{i_1} \cdots \varphi_{i_N}} = \ket{0_{N+1}}.
\end{equation}
Similarly, the annihilation operator $\hat{a}_r$ removes an electron from $\varphi_r$ if that spin-orbital is present in the determinant state:
\begin{align}
\hat{a}_r \ket{\varphi_r \varphi_{i_1} \cdots \varphi_{i_N}} &= \ket{\varphi_{i_1} \cdots \varphi_{i_N}}.
\end{align}
If $\varphi_r$ is absent, the result is the zero of the $(N-1)$--electron Fock space, i.e., $\ket{0_{N-1}}$
\begin{equation} \label{eq:zero_annihilation}
r\notin(i_1,\ldots,i_N)\,\Longrightarrow \,\hat{a}_r \ket{\varphi_{i_1} \cdots \varphi_{i_N}} = \ket{0_{N-1}}.
\end{equation}
Applying any annihilation operator to the physical vacuum returns the null state:
\begin{equation}\label{eq:zero_annihilationPhysical}
\forall r \in \llbracket 1, L \rrbracket,\, \hat{a}_r\ket{\;} = \ket{0_0}.
\end{equation}
Applying any creation operator to an element of $\mathcal{F}_L$ returns the zero of $\mathcal{F}_L$
\begin{equation}\label{eq:zero_creation_FL}
\forall \ket{\Psi} \in \mathcal{F}_L,\,\forall r \in \llbracket 1, L \rrbracket,\, \hat{a}_r^\dag\ket{\Psi} = \ket{0_L}.
\end{equation}
Let $r$ and $s$ be two integers between 1 and $L$. The creation and annihilation operators satisfy some anticommutation relations: two relatively to the anticommutation of two creation operators, namely
\begin{equation*}
\forall i\in\llbracket 0 , (L-2)\rrbracket, \forall \ket{\Psi} \in \mathcal{F}_i,\left[\hat{a}^\dag_r,\hat{a}^\dag_s\right]_+\ket{\Psi} = \ket{0_{i+2}},
\end{equation*}
and
\begin{equation*}
\forall i\in\llbracket (L-1) , L \rrbracket, \forall \ket{\Psi} \in \mathcal{F}_i,\left[\hat{a}^\dag_r,\hat{a}^\dag_s\right]_+\ket{\Psi} = \ket{0_{L}},
\end{equation*}
two relatively to the anticommutation of two annihilation operators, namely
\begin{equation*}
\forall i\in\llbracket 0 , 1 \rrbracket, \forall \ket{\Psi} \in \mathcal{F}_i,\left[\hat{a} _r,\hat{a} _s\right]_+\ket{\Psi} = \ket{0_{0}},
\end{equation*}
and
\begin{equation*} 
\forall i\in\llbracket 2 , L\rrbracket, \forall \ket{\Psi} \in \mathcal{F}_i,\left[\hat{a}_r,\hat{a}_s\right]_+\ket{\Psi} = \ket{0_{i-2}},
\end{equation*} 
and finally the anticommutation relation implying one creation and one annihilation operator:
\begin{equation}
\left[\hat{a}^\dag_r,\hat{a}_s\right]_+ = \delta_{r, s}\hat{\mathds{I}}.\label{eq:anticommutation_with_identity}
\end{equation}
In \eqref{eq:anticommutation_with_identity}, ``$\hat{\mathds{I}}$'' is the identity operator.

\begin{remark}
From now on, we will only consider second quantization operators indexed relatively to elements of $\mathcal{B}$, and spin-orbital functions that are elements of $\mathcal{B}$.
\end{remark}
\noindent Any determinant state can be generated from the physical vacuum by successive application of creation operators,
\begin{equation*}
\ket{\varphi_{i_1} \cdots \varphi_{i_N}} = \hat{a}^\dag_{i_1} \cdots \hat{a}^\dag_{i_N}\ket{\;}.
\end{equation*}
\begin{definition}[Occupation number relatively to the physical vacuum]
The occupation number of any spin-orbital $\varphi_r$ relatively to the physical vacuum $\ket{\;}$ is defined as $\braket{\;|\hat{a}^\dag_r\hat{a}_r|\;}$, which is always equal to zero.
\end{definition}
\begin{definition}[Occupation number relatively to a one-determinant state]
The occupation number of a spin-orbital $\varphi_r$ relatively to a one-determinant state $\Psi$ is a number, written $n_r^\Psi$, that is equal to one if $\varphi_r$ is present in the $\Psi$ determinant state and zero otherwise. 
\end{definition}
\noindent Instead of the physical vacuum, one may adopt as a reference a one-determinant state, known as the Fermi vacuum $\Psi_0$. Usually, $\mathcal{B}$ is ordered such that its $N$ first elements are the only elements of the Fermi vacuum:
%In this work, the Fermi vacuum $\ket{\Psi_0}$ contains $N$ electrons, with $N \leq L$.
\begin{equation*}
\ket{\Psi_0} = \ket{{\varphi_1 \cdots \varphi_N}}.
\end{equation*}
Three cases will be encountered in this contribution: (i) the second quantization operator of interest points to a spin-orbital that is present in the Fermi vacuum (i.e., the corresponding Fermi-vacuum occupation number is one), (ii) the operator points to a spin-orbital that is absent in the Fermi vacuum (i.e., the corresponding occupation number is zero), or (iii) the occupation number of the spin-orbital to which the operator points is not specified. For each case, a specific notation is introduced: ``$o$'' is used for pointing at \textit{occupied} spinorbitals, i.e. in case (i); ``$v$'' is used for pointing at \textit{virtual} spinorbitals, i.e., in case (ii); and ``$a$'' is used for \textit{arbitrarily} pointing at any element of $\mathcal{B}$, i.e., in case (iii).  These three cases are summarized in Table \ref{tab:occ-virt-arb}.
\begin{table}[h!]
\begin{center}
\begin{tabular}{cccc}
\hline 
 & Operators & Indexes & Occupation number \\ 
\hline 
Occupied & $\hat{o}^\dag,\hat{o}$ & \textit{i,j,k},$\ldots$ & 1 \\ 
Virtual & $\hat{v}^\dag,\hat{v}$ & \textit{a,b,c},$\ldots$ & 0 \\ 
Arbitrary & $\hat{a}^\dag,\hat{a}$ & \textit{r,s,t},$\ldots$ & 1 or 0 \\ 
\hline 
\end{tabular} 
\caption{Summary of the cases encountered in this paper for labelling the second quantization operators relatively to the Fermi vacuum.}
\label{tab:occ-virt-arb}
\end{center}
\end{table}
For the sake of brevity, in the following we name \textit{v-}creation and \textit{v-}annihilation operators the second quantization operators corresponding to virtual spin-orbitals. Similar convention is used for introducing \textit{o-}creation and \textit{o-}annihilation operators for operators corresponding to occupied spin-orbitals. We also use ``\textit{o}-operator'' and ``\textit{v-}operator'' expressions when the status (creation/annihilation) is not specified.

Relative to the Fermi vacuum, we define excitation operators which combine a creation operator on a virtual orbital with an annihilation operator on an occupied orbital: Let $i$ be any integer between 1 and $N$, and let $a$ be any integer between $(N+1)$ and $L$. The excitation operator corresponding to the removal of $\varphi_i$ and its replacement by $\varphi_a$ reads 
\begin{equation}\label{eq:ExcOp}
\hat{E}_i^a\coloneqq\hat{v}_a^\dag\hat{o}_i.
\end{equation}
For the sake of clarity in further notations, we also introduce the corresponding deexcitation operator, 
\begin{equation}\label{eq:DeexcOp}
\hat{D}_a^i\coloneqq\hat{o}_i^\dag\hat{v}_a,
\end{equation}
which combines a creation operator on an occupied orbital with an annihilation operator on a virtual orbital.

\subsection{Nullity criteria for expectation values}

\noindent Let $\hat{C}$ be a chain of second quantization operators. Let $\Psi$ be a quantum state. In the following, we name \textit{expectation value of} $\hat{C}$ \textit{relatively to} $\Psi$ the $\braket{\Psi,\hat{C}\Psi}$ quantity, that we write in the physicists bra-ket notation $\braket{\Psi|\hat{C}|\Psi}$ or, in short, $\braket{\hat{C}}_{\Psi}$. This extends to the physical vacuum, in which case the short notation will simply be $\braket{\hat{C}}$.

The scalar product of two $N$-function determinant states $\Psi_1$ and $\Psi_2$ both written solely with elements of $\mathcal{B}$ and differing by at least one spin-orbital is zero:
\begin{equation}\label{eq:orthogonalite_SD}
(\exists r \in \llbracket 1, L\rrbracket,\, n_r^{\Psi_1} \neq n_r^{\Psi_2}) \, \Longleftrightarrow \, \braket{\Psi_1|\Psi_2} = 0.
\end{equation}

\subsubsection{Expectation values relatively to the physical vacuum}

\noindent If the $\hat{C}$ chain is a single second quantization operator, its expectation value relatively to the physical vacuum is zero. Indeed, it either annihilates into the bra or ket physical vacuum. A generalization of this simple observation can be done for any $\hat{C}$ chain. 

\begin{proposition}\label{prop:criteria_nullity_EVC_2nd}
Let $\hat{C}$ be a fermionic second quantization operators chain. The expectation value of $\hat{C}$ relatively to the physical vacuum is zero if at least one of the two following criteria is not met:
\begin{itemize}
\item[P1] Reading right to left, there {is} no more annihilation than creation operators,
\item[P2] The number of creation and annihilation operators are equal.
\end{itemize}
\end{proposition}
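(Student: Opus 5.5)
We argue by contraposition, treating the two criteria separately. Write $\hat{C} = \hat{c}_1\hat{c}_2\cdots\hat{c}_n$, where each $\hat{c}_k$ is a creation or annihilation operator indexed on $\mathcal{B}$. We track the ket $\hat{c}_k\cdots\hat{c}_n\ket{\;}$ as the operators are applied from right to left. By the action rules above, starting from $\ket{\;}$, each partial product is either a (signed) one-determinant state or a zero vector. Moreover, it lies in $\mathcal{F}_{m_k}$, where $m_k$ is the number of creation operators minus the number of annihilation operators among $\hat{c}_k,\ldots,\hat{c}_n$, for as long as $m_k$ stays within $\llbracket 0,L\rrbracket$.

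\textbf{Criterion P1.} Suppose P1 fails. Then there is a first position $k$, reading right to left, at which the annihilation operators among $\hat{c}_k,\ldots,\hat{c}_n$ outnumber the creation operators. Minimality of $k$ gives $m_{k+1}=0$, so $\hat{c}_{k+1}\cdots\hat{c}_n\ket{\;}$ lies in $\mathcal{F}_0$, i.e. it equals $\lambda\ket{\;}$ for some $\lambda\in\mathbb{C}$. It is not enough that the count of remaining electrons is zero; we need the state to actually be a multiple of the vacuum, and this holds because $\mathcal{F}_0$ is one-dimensional. The operator $\hat{c}_k$ is then an annihilation operator, so \eqref{eq:zero_annihilationPhysical} gives $\hat{c}_k(\lambda\ket{\;})=\ket{0_0}$. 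Linearity then kills the whole product: every further operator maps a zero vector to a zero vector, in the appropriate space (\eqref{eq:zero_creation_FL} handles the top space). Hence $\braket{\hat{C}}=0$.

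\textbf{Criterion P2.} Suppose P1 holds but P2 fails, so $m_1\neq 0$. If P1 holds, every partial count satisfies $m_k\ge 0$, and $m_1>0$. Then $\hat{C}\ket{\;}$ is either a zero vector or a determinant with $m_1\geq 1$ electrons. In the second case its overlap with $\ket{\;}$ vanishes because the states have different particle numbers: the vacuum has every occupation number zero while $\hat{C}\ket{\;}$ has at least one occupied spin-orbital, so \eqref{eq:orthogonalite_SD} applies (or simply orthogonality of Fock sectors). If at some point the count would exceed $L$, then \eqref{eq:zero_creation_FL} already produced a zero vector. Either way $\braket{\;|\hat{C}|\;}=0$. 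By symmetry one could instead act leftwards on the bra, but this is unnecessary here.

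The only delicate point is bookkeeping: making sure the zero vectors $\ket{0_k}$ stay zero under every subsequent operator, and that the Fock-sector index is tracked correctly. I expect this to be routine, using linearity of the operators, since each maps zero to zero.
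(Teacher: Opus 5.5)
Your proof is correct and follows the paper's own argument: you track the net electron count of the partial products acting right to left on $\ket{\;}$, so a negative count forces an annihilation operator to act on the vacuum (P1), while a positive final count gives a state orthogonal to $\ket{\;}$ and the case with more annihilation than creation operators is sent back to P1 (P2). Your version is somewhat tighter than the paper's. You locate the first position where the count turns negative, so the preceding state lies in the one-dimensional $\mathcal{F}_0$. You also appeal to orthogonality of different particle-number sectors rather than to the same-$N$ relation \eqref{eq:orthogonalite_SD}. The route, however, is the same.
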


\begin{proof}[Proof for P1]
Let us consider the $M$ first operators of $\hat{C}$ read from the right to the left. Amongst these $M$ operators, there are $m_c$ and $m_a$ creation and annihilation operators, respectively. We need to study the case $m_c < m_a$. Applying the $M$ operators to $\ket{\;}$, if situation described in \eqref{eq:zero_creation} or \eqref{eq:zero_annihilation} is met the $\braket{\hat{C}}$ expectation value is zero. Otherwise, we expect to find $(m_c - m_a)$ spin-orbitals involved in the determinant state, which is strictly less than zero. This implies that at a certain point, an annihilation operator has been applied to the physical vacuum, leading to the null state according to \eqref{eq:zero_annihilationPhysical}, and to an expectation value of zero.
\end{proof}

\begin{proof}[Proof for P2]
Let $M_c$ and $M_a$ be the number of creation and annihilation operators in $\hat{C}$, respectively. If $M_c > M_a$, unless we are in the case described in \eqref{eq:zero_creation} or \eqref{eq:zero_annihilation} — in which case $\braket{\hat{C}}$ is readily zero —, there exists at least one $r$ value in $\llbracket 1,L\rrbracket$ such that the occupation number corresponding to $\varphi _r$ in $\hat{C}\ket{\;}$ is different from zero. Accordingly, and due to \eqref{eq:orthogonalite_SD}, we find $\braket{\hat{C}}=0$. If $M_c < M_a$, we know from the above proof that $\hat{C}\ket{\;} = \ket{0_0}$.
\end{proof}
\noindent {\normalfont{{Proposition \ref{prop:criteria_nullity_EVC_2nd}}}} is pure-syntactic: The determination of whether a given chain satisfies sufficient conditions for its expectation value relatively to the physical vacuum to be zero reduces to analysing the structure of the chain. We would like to highlight that this proposition should be understood as a \textit{negative} criterion of nullity of $\braket{\hat{C}}$, which is decided if at least one of the two criteria is \textit{not} met. We emphasize this point with the two following examples, which explore the case of chains of operators that satisfy only one criterion among the two defined in {\normalfont{{Proposition \ref{prop:criteria_nullity_EVC_2nd}}}} but not the other.

\begin{example}
Let $(q,r,s,t)$ be a 4-tuple of integers, all belonging to $\llbracket 1,L\rrbracket$. Then,
\begin{equation*}
\bra{\;} \hat{a}_q^\dag\hat{a}_r\hat{a}_s\hat{a}^\dag_t\ket{\;} = \bra{\;} \hat{a}_q^\dag\hat{a}_r\hat{a}_s\ket{\varphi_t} = \delta_{s, t}\bra{\;} \hat{a}_q^\dag\hat{a}_r\ket{\;} = \delta_{s,t}\left(0\bra{\;} \hat{a}_q^\dag\ket{\;}\right) = 0.
\end{equation*}
Criterion 2 in {\normalfont{{Proposition \ref{prop:criteria_nullity_EVC_2nd}}}} is satisfied, but not criterion 1. Indeed, consider $M=3$ in proof of {\normalfont{{Proposition \ref{prop:criteria_nullity_EVC_2nd}}}}: two annihilation operators and one creation operator are met. Hence the expectation value can be deduced to be zero without having to apply the operators one after the other.
\end{example}

\begin{example}
Let $(q,r,s)$ be a 3-tuple of integers, all belonging to $\llbracket 1,L\rrbracket$. Then,
\begin{equation*}
\bra{\;}\hat{a}_q^\dag\hat{a}_r\hat{a}^\dag_s\ket{\;} = \delta_{r, s}\braket{\;|\varphi_q} = 0.
\end{equation*}
In that case, we see that criterion 1 in {\normalfont{{Proposition \ref{prop:criteria_nullity_EVC_2nd}}}} is satisfied, but criterion 2 is not. Again, the expectation value can be deduced to be zero without having to apply the operators one after the other.
\end{example}

\begin{remark}
It is important to notice that if one criterion is satisfied, it says nothing about whether the $\braket{\hat{C}}$ expectation value is zero or not: not meeting at least one of the two criteria is sufficient but not necessary to ensure a zero expectation value. This is illustrated in  {\normalfont{Example \ref{ex:PhysVac3}}}.
\end{remark}

\begin{example} \label{ex:PhysVac3}
Let $p$ and $q$ be two different integers, both belonging to $\llbracket 1,L\rrbracket$. Then,
 $$\braket{\;|\hat{a}_p\hat{a}_q^\dag|\;}=0.$$
 The structure of the chain satisfies both criteria in {\normalfont{{Proposition \ref{prop:criteria_nullity_EVC_2nd}}}} but the expectation value is zero due to the fact that $p$ is different from $q$.
\end{example} 
  
\subsubsection{Expectation values relatively to the Fermi vacuum}

\noindent If, rather than the physical vacuum, one is interested in evaluating expectation values relatively to the Fermi vacuum, the \textit{o}- and \textit{v-}operators must be considered separately. Indeed, while \textit{o-}creation operators applied to the Fermi vacuum results in $\ket{0_{N+1}}$ --- see \eqref{eq:zero_creation} ---, we see that acting with a \textit{v-}creation operator on the Fermi vacuum results in adding the target spin-orbital to the determinant state. On the other hand, applying an \textit{o-}annihilation operator to the Fermi vacuum will remove the target spin-orbital from the determinant state while applying a \textit{v-}annihilation operator to the Fermi vacuum will lead to $\ket{0_{N-1}}$ --- see \eqref{eq:zero_annihilation}.

Proposition \ref{prop:criteria_nullity_EVC_fermi_2nd} is the transposition of Proposition \ref{prop:criteria_nullity_EVC_2nd} to expectation values relatively to the Fermi vacuum.

\begin{proposition}\label{prop:criteria_nullity_EVC_fermi_2nd}
Let $\hat{C}$ be a fermionic second quantization operators chain. The expectation value of $\hat{C}$ relatively to the Fermi vacuum is zero if at least one of the four following criteria is not met:
\begin{itemize}
\item[P1] Reading right to left, there is no more \textit{v-}annihilation than \textit{v-}creation operators.
\item[P2] The number of \textit{v-}creation and \textit{v-}annihilation operators are equal.
\item[P3] Reading right to left, there is no more \textit{o-}creation than \textit{o-}annihilation operators.
\item[P4] The number of \textit{o-}creation and \textit{o-}annihilation operators are equal.
\end{itemize}
\end{proposition}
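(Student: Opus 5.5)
The approach mirrors the proof of Proposition \ref{prop:criteria_nullity_EVC_2nd}. The key observation is that, relative to $\ket{\Psi_0}$, the occupied and virtual spin-orbitals form two disjoint subsets of $\mathcal{B}$ that can be tracked separately. For a determinant state $\ket{\Phi}$ built from elements of $\mathcal{B}$, I introduce two counts: the number $\nu_v(\Phi)$ of virtual spin-orbitals present in it, and the number $h_o(\Phi)$ of occupied spin-orbitals (those with index $\le N$) absent from it, i.e.\ the holes. For $\ket{\Psi_0}$ both counts vanish. Applying the operators of $\hat{C}$ one by one from right to left to $\ket{\Psi_0}$, each nonzero intermediate result is, up to a sign, a determinant state. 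A \textit{v}-creation raises $\nu_v$ by one and a \textit{v}-annihilation lowers it by one. An \textit{o}-annihilation raises $h_o$ by one and an \textit{o}-creation lowers it by one. The remaining combinations leave the relevant count unchanged. As soon as the situation of \eqref{eq:zero_creation} or \eqref{eq:zero_annihilation} occurs, the ket is a zero vector and $\braket{\hat{C}}_{\Psi_0}=0$ follows at once, so I may assume no such event happens.

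For P1, suppose that among the first $M$ operators read from the right there are strictly more \textit{v}-annihilations than \textit{v}-creations. If no zero had appeared, we would have $\nu_v=m_c^v-m_a^v<0$, which is impossible. Hence at some step a \textit{v}-annihilation acted on a state containing no virtual spin-orbital of the required kind, which is exactly \eqref{eq:zero_annihilation}. The argument for P3 is identical, with $h_o$ in place of $\nu_v$. An excess of \textit{o}-creations over \textit{o}-annihilations would force $h_o<0$, so some \textit{o}-creation must have acted on an already occupied spin-orbital, which is \eqref{eq:zero_creation}.

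For P2 and P4, assume that no zero occurred while evaluating $\hat{C}\ket{\Psi_0}$. If the total numbers of \textit{v}-creations and \textit{v}-annihilations differ, then $\nu_v(\hat{C}\Psi_0)\neq 0$. The excess of annihilations is already excluded by the P1 argument applied with $M$ equal to the full length of the chain. So some virtual spin-orbital $\varphi_a$ has $n_a^{\hat{C}\Psi_0}=1\neq 0=n_a^{\Psi_0}$, and \eqref{eq:orthogonalite_SD} gives $\braket{\Psi_0|\hat{C}|\Psi_0}=0$. If the electron numbers of the two states differ, the orthogonality holds anyway because they lie in different Fock spaces. The case of P4 is symmetric: $h_o\neq0$ means some occupied $\varphi_i$ is missing from $\hat{C}\ket{\Psi_0}$, and \eqref{eq:orthogonalite_SD} applies again.

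The only delicate point is the bookkeeping claim that each nonzero intermediate result is a single determinant, up to sign, whose counts $\nu_v$ and $h_o$ change exactly as described. This follows from the defining actions of $\hat{a}^\dag_r$ and $\hat{a}_r$ on determinant states, once these are reordered using the antisymmetry of the wedge product. I expect this to be routine but to require some care with the occupied/virtual partition, which is fixed by $\Psi_0$ and not by the intermediate states.
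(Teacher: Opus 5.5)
Your proof is correct. It differs from the paper's in how the interleaving of \textit{o}- and \textit{v}-operators is handled.

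The paper first uses the anticommutation of \textit{o}-operators with \textit{v}-operators to reorder $\hat{C}$ into $\hat{C}_v$ (all \textit{v}-operators act first on $\Psi_0$) or $\hat{C}_o$ (all \textit{o}-operators act first). It notes that this changes the expectation value by at most a sign. It then obtains P1 and P2 by observing that \textit{v}-operators act on $\Psi_0$ as operators act on the physical vacuum, i.e.\ by reduction to Proposition \ref{prop:criteria_nullity_EVC_2nd}. P3 and P4 come from a mirrored count of occupied spin-orbitals in $\hat{C}_o\ket{\Psi_0}$.

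You never reorder. Since \textit{o}-operators leave $\nu_v$ unchanged and \textit{v}-operators leave $h_o$ unchanged, you can follow both counters on the original chain. You need only the defining action of $\hat{a}^\dag_r$ and $\hat{a}_r$ on determinants, with no appeal to anticommutation relations or sign bookkeeping.

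What each route buys:
\begin{itemize}
\item The paper's route gives an explicit operator-level decoupling and reuses Proposition \ref{prop:criteria_nullity_EVC_2nd} directly. That decoupling is what the later splitting $S_o\,\&\,S_v$ mirrors.
\item Your route is more self-contained and treats all four criteria uniformly, with holes making P3 and P4 literal mirrors of P1 and P2.
\end{itemize}

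Your bookkeeping also gives the paper's later depth result almost for free. When $S_o$ and $S_v$ are Dyck words, take the state obtained by applying to $\Psi_0$ all operators lying to the right of a given \textit{o}- (respectively \textit{v}-) operator. Its $h_o$ (respectively $\nu_v$) is exactly the depth of $S_o$ (respectively $S_v$) right after the corresponding bracket. The bounds $h_o\le N$ and $\nu_v\le L-N$ therefore yield Proposition \ref{prop:criteria_depth_fermi} with no extra work.
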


\begin{proof}[Proof for P1 and P2]
The operators in $\hat{C}$ can be reordered in two ways in order to provide two new chains, namely $\hat{C}_o$ and $\hat{C}_v$. The former is structured as follows: reading it from the right to the left, all the \textit{o}-operators are first met in the same order as in $\hat{C}$; all the \textit{v-}operators are then met, in the same order as in $\hat{C}$. The latter, i.e., $\hat{C}_v$, is constructed in a similar fashion: all the \textit{v}-operators are first met in the same order as in $\hat{C}$, before the \textit{o}-operators are met, in the same order as in $\hat{C}$. The only effect of such a reorganization of the operators can be a change in the sign of the expectation value. Therefore, if the expectation value of $\hat{C}_o$ (respectively, $\hat{C}_v$) relatively to the Fermi vacuum is equal to zero, so is $\braket{\hat{C}}_{\Psi_0}$. An illustration of these rearrangements is given in Example \ref{ex:fermi_vac}. 

All \textit{v-}operators behave the same way relatively to the Fermi and the physical vacuum. If \textit{v-}operators do not respect at least one of the two criteria introduced in {\normalfont{Proposition \ref{prop:criteria_nullity_EVC_2nd}}} the expectation value of $\hat{C}_v$ is equal to zero, so is $\braket{\hat{C}}_{\Psi_0}$.
\end{proof}

\begin{proof}[Proof for P3]
Let us consider the $M$ first \textit{o-}operators of $\hat{C}_o$ read from the right to the left. Amongst these $M$ operators, there are $m_c$ and $m_a$ \textit{o-}creation and \textit{o-}annihilation operators, respectively. We need to study the case $m_a < m_c$. Applying the $M$ \textit{o-}operators to $\ket{\Psi_0}$, if situation described in \eqref{eq:zero_creation} or \eqref{eq:zero_annihilation} is met the $\braket{\hat{C}_o}_{\Psi_0}$ expectation value is zero. Otherwise, we expect to find $(N + m_c - m_a)$ occupied spin-orbitals involved in the determinant state, which is strictly more than $N$. This implies that at a certain point, an \textit{o-}creation operator has been applied to the Fermi vacuum, resulting in the zero of the $(N+1)$--electron Fock space because all occupied spin-orbitals are already in the Fermi vacuum, and to a value of zero for $\braket{\hat{C}_o}_{\Psi_0}$, in which case $\braket{\hat{C}}_{\Psi_0}$ is also equal to zero.
\end{proof}

\begin{proof}[Proof for P4]
Let $M_c$ and $M_a$ be the number of \textit{o-}creation and \textit{o-}annihilation operators in $\hat{C}_o$, respectively. If $M_a > M_c$, unless we are in the case described in \eqref{eq:zero_creation} or \eqref{eq:zero_annihilation} — in which case $\braket{\hat{C}_o}_{\Psi_0}$ is readily zero —, there exists at least one $r$ value in $\llbracket 1,N\rrbracket$ such that the occupation number corresponding to $\varphi _r$ in $\hat{C}_o\ket{\Psi_0}$ is different from one. Accordingly, and due to \eqref{eq:orthogonalite_SD}, we find $\braket{\hat{C}_o}_{\Psi_0}=0$. If $M_a < M_c$, we know from the above proof that $\braket{\hat{C}_o}_{\Psi_0}=0$. Both cases ($M_a < M_c$ and $M_a > M_c$) lead to $\braket{\hat{C}}_{\Psi_0} = 0$.
\end{proof}
\noindent Compared to the physical vacuum case, the number of criteria doubled. The role of creation and annihilation operator are flipped between \textit{o-}operator and \textit{v-}operator.
Any \textit{o-}operators anticommutes with any \textit{v-}operator. Thanks to this, the arrangement of \textit{o-}operator relatively to \textit{v-}operator does not matter, solely the structure of \textit{o-}operators taken alone and  \textit{v-}operators taken alone import. The following example illustrates this point.
\begin{example}\label{ex:fermi_vac}
Let $(i,j)$ be a couple of integers, both belonging to $\llbracket 1,N\rrbracket$. Let $(a,b)$ be a couple of integers, both belonging to $\llbracket N+1,L\rrbracket$. We build a $\hat{C}$ chain: $\hat{C}$ = $\hat{o}_i\hat{v}_a\hat{o}_j^\dag\hat{v}^\dag_b$. The $\hat{C}_v$ chain corresponding to $\hat{C}$ reads $\hat{o}_i\hat{o}_j^\dag\hat{v}_a\hat{v}^\dag_b$. We then see that
\begin{equation*}
 \braket{\hat{o}_i\hat{v}_a\hat{o}_j^\dag\hat{v}^\dag_b}_{\Psi_0} = - \braket{\hat{o}_i\hat{o}_j^\dag\hat{v}_a\hat{v}^\dag_b}_{\Psi_0} = 0.
\end{equation*}
The $\hat{C}_o$ chain corresponding to $\hat{C}$ reads $\hat{v}_a\hat{v}^\dag_b\hat{o}_i\hat{o}_j^\dag$. Then,
\begin{equation*}
 \braket{\hat{o}_i\hat{v}_a\hat{o}_j^\dag\hat{v}^\dag_b}_{\Psi_0} = - \braket{\hat{v}_a\hat{v}^\dag_b\hat{o}_i\hat{o}_j^\dag}_{\Psi_0} = 0.
\end{equation*}
The three chains ($\hat{C}$, $\hat{C}_v$, and $\hat{C}_o$) are equal up to sign. All of them satisfy criteria 1, 2, and 4 in {\normalfont{Proposition \ref{prop:criteria_nullity_EVC_fermi_2nd}}}, but not criterion 3. 
\end{example}
\noindent {\normalfont{Proposition \ref{prop:criteria_nullity_EVC_fermi_2nd}}} can be used even if the $\hat{C}$ chain contains only one type (\textit{v-} or \textit{o-}) of operator. This can be illustrated in {\normalfont{Example \ref{ex:Fermi_vac_no_vir}}} for a chain solely composed with \textit{o-}operators.

\begin{example}\label{ex:Fermi_vac_no_vir}
Let $(i,j,k)$ be a 3-tuple of integers, all belonging to $\llbracket 1,N\rrbracket$. Then,
\begin{equation*}
\braket{\hat{o}_i\hat{o}_j^\dag\hat{o}_k}_{\Psi_0} = 0.
\end{equation*}
Criteria 1 and 2 in {\normalfont{Proposition \ref{prop:criteria_nullity_EVC_fermi_2nd}}} are satisfied; criterion 3 is met, but criterion 4 is not.
\end{example}

\section{Dyck language}

\noindent In this section we introduce the two-symbol Dyck language, i.e., the dedicated alphabet (denoted $\mathcal{L}_1$ below) together with the corresponding syntax — i.e., the construction rules for a well-formed formula — materialized in Definition \ref{def:dyck_word} below. Afterwards, we show how one can manipulate strings written in this language in order to generate other chains. These manipulations will reveal to be useful in the next section when we will provide tools to connect Dyck language and second quantization.

\subsection{Definition and characterization of Dyck words}

Let $\mathcal{L}_1 \coloneqq \{ \bm{)}, \bm{(}\}$ be the alphabet consisting in the two following symbols: the opening round bracket ``$\bm{(}$'' and the closing round bracket ``$\bm{)}$''. Some strings written using elements of $\mathcal{L}_1$ can be termed ``Dyck words'', some cannot. Definition \ref{def:dyck_word} provides two construction rules allowing one to characterize such a string as a Dyck word.

\begin{definition}[Dyck word]\label{def:dyck_word}
Let $S$ be a string written using elements $\mathcal{L}_1$. The $S$ string is a {\normalfont{Dyck word}} if and only if both the following criteria are met:
\begin{itemize}
\item[1] Reading right to left, there is no more {\normalfont{``$\bm{(}$''}} than {\normalfont{``$\bm{)}$''}}.
\item[2] The number of {\normalfont{``$\bm{(}$''}} and {\normalfont{``$\bm{)}$''}} are equal.
\end{itemize}
\end{definition}
\noindent In a more intuitive way, a Dyck word is a well-nested string of brackets, with each opening bracket having the corresponding closing one.
\begin{remark}\label{rem:round-square-dyck}
The choice of using \textit{round brackets} is arbitrary; any kind (square, curly,...) of brackets could be chosen. In fact, any set of two symbols could be used rather than $\mathcal{L}_1$. For instance, the ``$A$'' and ``$B$'' symbols are often used in the literature.
\end{remark}
\begin{remark}\label{rem:emptyword}
The empty word is a Dyck word.
\end{remark}
\noindent Example \ref{ex:dyck_word} exhibits seven strings of same length — i.e., with the same number of symbols — constructed from $\mathcal{L}_1$. Five are Dyck words, two are not.

\begin{example}\label{ex:dyck_word}
Using the $\mathcal{L}_1$ alphabet, we can build all Dyck words of length 6, \textit{i.e.}, using 3 opening brackets and 3 closing brackets:
\begin{equation*}
\bm{(\,)\,(\,)\,(\,)}, \quad \bm{(\,(\,)\,)\,(\,)}, \quad \bm{(\,)\,(\,(\,)\,)}, \quad \bm{(\,(\,)\,(\,)\,)}, \quad \bm{(\,(\,(\,)\,)\,)}.
\end{equation*}
The following string meets criterion 1 in {\normalfont{Definition \ref{def:dyck_word}}}, but not criterion 2:
\begin{equation*}
\bm{(\,(\,)\,)\,)\,)}.
\end{equation*}
Hence, it is not a Dyck word. On the other hand, the following string meets criterion 2 in {\normalfont{Definition \ref{def:dyck_word}}} but not criterion 1 and is not a Dyck word either:
\begin{equation*}
\bm{)\,)\,(\,)\,(\,(}.
\end{equation*}
\end{example}

\noindent Now that Dyck words are well defined, we can introduce a quantitative tool which characterizes them. This tool is a descriptor called \textit{depth}.

\begin{definition}[Depth]\label{def:depth}
Let $S$ be Dyck word. Reading $S$ from the left to the right, the {\normalfont{depth}} $d_k$ of the $k^{\text{th}}$ position reached after reading the $k^{\text{th}}$ symbol is defined as the difference between the number of opening and closing brackets before reaching that position.
By convention, the depth reached before the string, i.e. $d_0$, is zero. 
\end{definition}
\noindent There is a one-to-one correspondence between a Dyck word and its list of depths. As illustrated in Example \ref{ex:DyckDepth}, the depth can be directly read from the Dyck word.
\begin{example}\label{ex:DyckDepth}
Consider the $\bm{(\,(\,)\,(\,)\,)\,(\,)}$ Dyck word. One can read it from the left to the right and annotate the depths after each symbol:
\begin{equation*}
{\textit{0}}\bm{\;(}{\it{\;1}}\bm{\;(}{\it{\;2}}\bm{\;)}{\it\;1}\bm{\;(}{\it\;2}\bm{\;)}{\it\;1}\bm{\;)}{\it\;0}\bm{\;(}{\it\;1}\bm{\;)}\it{\;0}.
\end{equation*}
Depth increases by one unit after each opening bracket and decreases by one unit after each closing bracket. The list of depths of the Dyck word $\bm{(\,(\,)\,(\,)\,)\,(\,)}$ is therefore $(0,1,2,1,2,1,0,1,0)$.
\end{example}
\begin{proposition}\label{prop:depthdycknonnegative}
The depth sequence of a Dyck word solely contains non-negative integers.
\end{proposition}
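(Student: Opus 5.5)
The plan is to convert the right-to-left condition of Definition \ref{def:dyck_word} into a left-to-right statement about prefixes, using criterion 2 to pass from suffixes to their complementary prefixes. Fix a Dyck word $S$ of length $n$ and a position $k\in\llbracket 0,n\rrbracket$. For any substring $T$ of $S$, write $o(T)$ and $c(T)$ for the numbers of opening and closing brackets in $T$. Split $S$ as $S=P_kQ_k$, where $P_k$ consists of the first $k$ symbols and $Q_k$ of the remaining $n-k$ symbols. By Definition \ref{def:depth}, the depth is $d_k=o(P_k)-c(P_k)$, so it suffices to show $o(P_k)\geq c(P_k)$.

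The first step is to read criterion 1 of Definition \ref{def:dyck_word} correctly. Reading $S$ from the right to the left, the symbols read after $n-k$ steps are exactly those of $Q_k$. Criterion 1, applied at that stage, therefore gives $o(Q_k)\leq c(Q_k)$ for every $k$, including the trivial case $k=n$ where $Q_k$ is empty.

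The second step uses criterion 2, which states $o(S)=c(S)$. Counting symbols in the two parts gives $o(S)=o(P_k)+o(Q_k)$, and likewise $c(S)=c(P_k)+c(Q_k)$. Substituting into the equality and rearranging yields
\begin{equation*}
d_k=o(P_k)-c(P_k)=c(Q_k)-o(Q_k)\geq 0 .
\end{equation*}
The case $k=0$ agrees with the convention $d_0=0$. Since each $d_k$ is a difference of two counts, it is an integer, so every entry of the depth sequence is a non-negative integer.

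The only real obstacle is bookkeeping. Criterion 1 is phrased right to left and therefore constrains suffixes, whereas depth is defined left to right and therefore concerns prefixes. The equality in criterion 2 is exactly what bridges the two, and the argument consists of making that identification explicit.
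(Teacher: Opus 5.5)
Your proof is correct and is essentially the paper's argument. The paper argues by contradiction that a negative depth $d_k$, together with criterion 2, forces the suffix to the right of position $k$ to contain more opening than closing brackets, which violates criterion 1. You carry out the same prefix/suffix bookkeeping directly, obtaining $d_k = c(Q_k)-o(Q_k)\geq 0$.
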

\begin{proof}(\textit{Reductio ad absurdum})\textbf{.} Let $S$ be a Dyck word. Consider the following possibility: at a given position $k$ in the word, the depth $d_k$ is strictly negative. Consequently, reading the word from the left to the right until the $k^\text{th}$ position, there is more closing than opening brackets. However, $S$ must contain the same number of opening and closing brackets due to criterion 2 in Definition \ref{def:dyck_word}. Therefore, there should be more opening than closing brackets on the right of position $k$, which would violate criterion 1 in Definition \ref{def:dyck_word}.
\end{proof}
\begin{definition}[Opening depth]
Let $S$ be a non-empty Dyck word. The list of {\normalfont{opening depths}} of $S$ is obtained by restricting its depth sequence to positions reached immediately after an opening bracket. The opening depth reached after reading the $k^{\text{th}}$ opening bracket is denoted by $d_k^\uparrow$.
\end{definition}

\begin{example}\label{ex:DyckOpeningDepth}
Let $S$ be the Dyck word from {\normalfont{Example \ref{ex:DyckDepth}}}. Its list of opening depths is $(1,2,2,1)$.
\end{example}

\noindent Dyck words can be categorized into two families, which are termed and defined in Definition \ref{def:strongweakdyckwords}.

\begin{definition}[Strong/weak dyck word]\label{def:strongweakdyckwords}
If a Dyck word never reaches a zero depth apart for the first and last depths, it is called a {\normalfont{strong Dyck word}}. Otherwise it is called a {\normalfont{{weak Dyck word}}}.
\end{definition}

\noindent Amongst the five Dyck words in Example \ref{ex:dyck_word}, the three first are weak Dyck words, while the two last are strong Dyck words.

\begin{proposition}\label{prop:weakdyck}
Any weak Dyck word can be decomposed into a concatenation of two or more strong Dyck words. 
\end{proposition}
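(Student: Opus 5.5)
The plan is to cut the weak Dyck word at every interior position where the depth returns to zero, and then to check that each resulting piece is a strong Dyck word.

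Let $S$ be a weak Dyck word of length $2n$, with depth sequence $(d_0,\ldots,d_{2n})$. By Definition \ref{def:strongweakdyckwords}, there is at least one index $k$ with $0<k<2n$ and $d_k=0$. We collect all such indices, together with the endpoints, into the increasing list $0=k_0<k_1<\cdots<k_m=2n$, where $m\geq 2$. The $k_j$ are exactly the positions of depth zero. We then cut $S$ into the substrings $S_j$ consisting of the symbols between positions $k_{j-1}$ and $k_j$, for $j=1,\ldots,m$. By construction $S=S_1S_2\cdots S_m$, and there are at least two pieces.

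Next we check that each $S_j$ is a Dyck word in the sense of Definition \ref{def:dyck_word}. The depth increment over $S_j$ is $d_{k_j}-d_{k_{j-1}}=0$, so $S_j$ contains equally many opening and closing brackets; this is criterion 2. For criterion 1, it is convenient to use the equivalent left-to-right formulation: every prefix has non-negative depth. The two formulations are equivalent for a string with balanced counts, by the complement argument used in the proof of Proposition \ref{prop:depthdycknonnegative}. The depth of $S_j$ after its $\ell$-th symbol is $d_{k_{j-1}+\ell}-d_{k_{j-1}}=d_{k_{j-1}+\ell}$. This is non-negative by Proposition \ref{prop:depthdycknonnegative}. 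Criterion 1 for $S_j$ then follows by that same complement argument: a right-to-left suffix of $S_j$ with an excess of ``$\bm{(}$'' would force the complementary prefix to have negative depth.

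Finally, we show each $S_j$ is strong. Its internal depths are $d_k$ for $k_{j-1}<k<k_j$. None of these is zero, because all zero-depth positions of $S$ were listed among the $k_j$. Each $S_j$ is also non-empty, since the $k_j$ are strictly increasing. So $S_j$ reaches depth zero only at its first and last positions, which is the definition of a strong Dyck word.

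The only delicate point is the translation between the paper's right-to-left criterion 1 and the prefix-depth formulation. It has to be stated explicitly, but it is a one-line complement argument of the same kind as the proof of Proposition \ref{prop:depthdycknonnegative}.
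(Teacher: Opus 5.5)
Your proof is correct and follows the same route as the paper, which also splits the weak Dyck word at every position where the depth reaches zero and observes that each piece is a strong Dyck word. You additionally spell out the checks the paper leaves implicit: balanced counts, non-negative prefix depths (with the complement argument back to criterion 1), and no interior zero in each non-empty piece.
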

\begin{proof}
This directly follows from the definition of weak Dyck words: splitting a weak Dyck word at positions in which the depth reaches zero results into sequences of strong Dyck words.
\end{proof}

\begin{definition}[Subword]
Each strong Dyck word obtained when decomposing a weak Dyck word as in {\normalfont Proposition \ref{prop:weakdyck}} is called a {\normalfont{{subword}}}. A strong Dyck word consists of exactly one subword.
\end{definition}

\subsection{Transformation}
A series of transformations can be applied to Dyck word. It can consist in adding, removing\cite{chistikov_re-pairing_2020}, moving\cite{baril_phagocyte_2006}, or flipping brackets\cite{barnabei_permutations_2016}. For instance, the concatenation mentioned in Proposition \ref{prop:weakdyck} is a transformation on Dyck words. In this section, we will explore a transformation called \textit{expulsion}.

\begin{definition}[Expulsion]
Let $S$ be a non-empty Dyck word of length $2n$. An {\normalfont{expulsion}} consists in removing one opening and one closing bracket on its right in $S$ to generate a new string $S'$. 
\end{definition}
\noindent After an expulsion, the resulting string may or may not be a Dyck word. In Example \ref{ex:dyck_path_expulsion} the expulsion results in a Dyck word.
\begin{example}\label{ex:dyck_path_expulsion}
Consider the $\bm{(\,(\,)\,(\,)\,)\,(\,)}$ Dyck word and the corresponding list of depths $(0,1,2,1,2,1,0,1,0)$. If the transformation consists in expulsing the second and fifth bracket, it can be described as follows:
\begin{equation*}
\stackrel{1}{\bm{(}}\;
\stackrel{2}{\bm{(}}\;
\stackrel{3}{\bm{)}}\;
\stackrel{4}{\bm{(}}\;
\stackrel{5}{\bm{)}}\;
\stackrel{6}{\bm{)}}\;
\stackrel{7}{\bm{(}}\;
\stackrel{8}{\bm{)}}\;
\quad\stackrel{\text{expulsion}}{\longrightarrow} \quad
\stackrel{1}{\bm{(}}\;
\;
\stackrel{3}{\bm{)}}\;
\stackrel{4}{\bm{(}}\;
\;
\stackrel{6}{\bm{)}}\;
\stackrel{7}{\bm{(}}\;
\stackrel{8}{\bm{)}}.
\end{equation*}
The resulting string, i.e., $\bm{(\;)\;(\;)\;(\;)}$, is a Dyck word described by the $(0,1,0,1,0,1,0)$ list of depths.
\end{example}
\noindent On the other hand, Example \ref{ex:dyck_path_expulsion_nondyck} provides an illustration of how an expulsion can result in a string that is not a Dyck word.
\begin{example}\label{ex:dyck_path_expulsion_nondyck}
Consider the $\bm{(\,(\,)\,(\,)\,)\,(\,)}$ Dyck word and the corresponding list of depths $(0,1,2,1,2,1,0,1,0)$. If the transformation consists in expulsing the second and eighth bracket, it can be described as follows:
\begin{equation*}
\stackrel{1}{\bm{(}}\;
\stackrel{2}{\bm{(}}\;
\stackrel{3}{\bm{)}}\;
\stackrel{4}{\bm{(}}\;
\stackrel{5}{\bm{)}}\;
\stackrel{6}{\bm{)}}\;
\stackrel{7}{\bm{(}}\;
\stackrel{8}{\bm{)}}\;
\quad\stackrel{\text{expulsion}}{\longrightarrow} \quad
\stackrel{1}{\bm{(}}\;
\;
\stackrel{3}{\bm{)}}\;
\stackrel{4}{\bm{(}}\;
\stackrel{5}{\bm{)}}\;
\stackrel{6}{\bm{)}}\;
\stackrel{7}{\bm{(}}\;
\end{equation*}
The resulting string, i.e., $\bm{(\;)\;(\;)\;)\;(}$, is not a Dyck word according to criterion 1 in {\normalfont Definition \ref{def:dyck_word}}.
\end{example}

\begin{proposition}\label{prop:action_association}
Let $S$ be a non-empty Dyck word of length $2n$. Let $(d_k)_{0 \leq k \leq 2n}$ be the list of depths of $S$. Let the expulsed opening bracket be the $n_1^{\text{th}}$ and the closing bracket be the $n_2^{\text{th}}$ (with, of course, $n_1 < n_2$). Let $(d'_k)_{0 \leq k \leq 2n-2}$ be list of depths of $S'$. For $k$ ranging from zero to $2n-2$, one can encounter the three following cases:
\begin{equation*}
\begin{cases}
	d'_k = d_k			&	\quad 0   \leq k < n_1,	\\
	d'_k = d_{k+1} - 1	&	\quad n_1 \leq k < n_2,		\\
	d'_k = d_{k+2}		&	\quad n_2 \leq k \leq 2n-2,	\\
\end{cases}
\end{equation*}  
\end{proposition}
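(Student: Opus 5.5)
I will write the depth as a prefix count and compare the prefixes of $S$ and $S'$ position by position. Let $s_1 s_2\cdots s_{2n}$ be the symbols of $S$, and put $\epsilon(\bm{(}) = +1$, $\epsilon(\bm{)}) = -1$. Definition \ref{def:depth} then reads
\begin{equation*}
d_k = \sum_{j=1}^{k}\epsilon(s_j), \qquad d_0 = 0.
\end{equation*}
The string $S'$ is obtained by deleting $s_{n_1}$, which is an opening bracket, and $s_{n_2}$, which is a closing bracket, with $n_1<n_2$. Its symbols are therefore $s'_1,\dots,s'_{2n-2}$, and the reindexing is
\begin{equation*}
s'_j = s_j \ \ (j<n_1), \qquad s'_j = s_{j+1} \ \ (n_1\le j< n_2-1), \qquad s'_j = s_{j+2} \ \ (n_2-1\le j\le 2n-2).
\end{equation*}
The depths of $S'$ are $d'_k=\sum_{j=1}^k \epsilon(s'_j)$. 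I do not need $S'$ to be a Dyck word; I only use the same prefix-sum definition of depth for it.

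I then treat the three ranges of $k$ in turn.

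\textbf{Case $0\le k<n_1$.} The first $k$ symbols of $S'$ coincide with those of $S$, so $d'_k=d_k$.

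\textbf{Case $n_1\le k<n_2$.} The prefix of length $k$ of $S'$ is the prefix of length $k+1$ of $S$ with $s_{n_1}$ removed. Since $\epsilon(s_{n_1})=+1$, this gives $d'_k=d_{k+1}-1$. The upper bound on this range needs care: the prefix of length $k$ of $S'$ contains no symbol coming from at or beyond $s_{n_2}$ exactly when $k+1\le n_2-1$.

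\textbf{Case $n_2\le k\le 2n-2$.} The prefix of length $k$ of $S'$ is the prefix of length $k+2$ of $S$ with both $s_{n_1}$ and $s_{n_2}$ removed. The two contributions $+1$ and $-1$ cancel, so $d'_k=d_{k+2}$.

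The main obstacle is the boundary index $k=n_2-1$, which falls under the middle case of the statement. There the prefix of length $k$ of $S'$ is the prefix of $S$ of length $n_2$ with both brackets removed. A direct computation gives $d'_{n_2-1}=d_{n_2}$, whereas the middle formula gives $d_{n_2}-1$.

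These two values agree only if the off-by-one is absorbed in the convention. I will check whether $d_{n_2}-1$ should be read with the depth taken \emph{before} position $n_2$. Indeed $d_{n_2-1}$ and $d_{n_2}$ differ by $1$ because $s_{n_2}$ is closing, and $d_{n_2}=d_{n_2-1}-1$ links the formulas. I will verify this boundary case on Example \ref{ex:dyck_path_expulsion}. If the convention does not close the gap, I will state the ranges with the boundary index attached to the third case, i.e.\ $n_1\le k<n_2-1$ and $n_2-1\le k\le 2n-2$. This is exactly what the index shift of $S'$ dictates.
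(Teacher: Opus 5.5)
Your reindexing of $S'$ is correct, and so are your prefix-count computations for $k<n_1$, for $n_1\le k\le n_2-2$, and for $n_2\le k\le 2n-2$. The gap is at the boundary index $k=n_2-1$, which occurs whenever $n_2\le 2n-1$. You claim that the prefix of length $n_2-1$ of $S'$ is the prefix of length $n_2$ of $S$ with both brackets removed, but that string has length only $n_2-2$. Your own reindexing gives $s'_{n_2-1}=s_{n_2+1}$. So the prefix of length $n_2-1$ of $S'$ is the prefix of length $n_2+1$ of $S$ with $s_{n_1}$ and $s_{n_2}$ removed, and hence $d'_{n_2-1}=d_{n_2+1}$. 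This is the third formula, not $d_{n_2}$.

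As written, your plan is therefore inconsistent: you compute $d_{n_2}$ but propose to file $k=n_2-1$ under $d'_k=d_{k+2}$. Neither escape route you mention works either. Definition \ref{def:depth} fixes $d_k$ unambiguously as a prefix count, so there is no convention to reinterpret. Example \ref{ex:dyck_path_expulsion} cannot settle the matter, because there $s_{n_2+1}=s_6$ is a closing bracket. In that example the stated value $d_5-1=0$ happens to be right, and it refutes your value $d_5=1$.

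What you should conclude is that the statement is false as written at $k=n_2-1$. Since $d_{n_2+1}=d_{n_2}+\epsilon(s_{n_2+1})$, the middle formula holds there if and only if $s_{n_2+1}$ is a closing bracket. The smallest counterexample is $S=\bm{(\,)\,(\,)}$ with $n_1=1$, $n_2=2$. Then $S'=\bm{(\,)}$ has $d'_1=1$, while the middle case predicts $d_2-1=-1$. Because this expulsion stays inside one subword, the stated formula would even contradict Proposition \ref{prop:expulsion}.

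The paper states this proposition without proof, so there is nothing to reconcile. The correct version is your fallback: $d'_k=d_{k+1}-1$ for $n_1\le k\le n_2-2$, and $d'_k=d_{k+2}$ for $n_2-1\le k\le 2n-2$. Your prefix-sum argument proves it once the boundary case is computed correctly. This is also the version the proof of Proposition \ref{prop:expulsion} actually needs. With it, the middle case only involves $d_{k+1}$ at positions strictly between $n_1$ and $n_2$, where $d_{k+1}\ge 1$ when both brackets lie in the same subword.
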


\begin{proposition}\label{prop:expulsion}
In a non-empty Dyck word, expulsing an opening bracket with any closing bracket to its right within the same subword results in a Dyck word.
\end{proposition}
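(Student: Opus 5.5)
The plan is to check the two criteria of Definition \ref{def:dyck_word} on $S'$ through its depth sequence, using Proposition \ref{prop:action_association}. First I will restate the Dyck condition in terms of depths. A string over $\mathcal{L}_1$ of length $2m$ is a Dyck word if and only if its depth sequence $(d_k)_{0\le k\le 2m}$ satisfies $d_{2m}=0$ and $d_k\ge 0$ for all $k$. Criterion 2 of Definition \ref{def:dyck_word} is exactly $d_{2m}=0$. Given $d_{2m}=0$, the number of ``$\bm{(}$'' minus the number of ``$\bm{)}$'' among the last $2m-k$ symbols equals $d_{2m}-d_k=-d_k$. So criterion 1 (reading right to left, never more opening than closing brackets) is equivalent to $d_k\ge 0$ for every $k$. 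The forward direction is already Proposition \ref{prop:depthdycknonnegative}, and the converse is the same short computation.

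Next I apply Proposition \ref{prop:action_association} to $S'$, case by case. For $k<n_1$ we have $d'_k=d_k\ge 0$. For $k\ge n_2$ we have $d'_k=d_{k+2}\ge 0$, and in particular $d'_{2n-2}=d_{2n}=0$, which settles criterion 2. The only remaining case is $n_1\le k<n_2$, where $d'_k=d_{k+1}-1$. Here I need $d_j\ge 1$ for all $n_1+1\le j\le n_2$.

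This middle range is the main obstacle, and it is where the hypothesis ``within the same subword'' is used. By Proposition \ref{prop:weakdyck} and the definition of subwords, the subword containing the $n_1^{\text{th}}$ and $n_2^{\text{th}}$ symbols occupies positions $p+1,\dots,q$, where $d_p=d_q=0$ and $d_j>0$ for $p<j<q$. Since $p<n_1<n_2\le q$, every $j$ with $n_1+1\le j\le n_2-1$ lies strictly inside $(p,q)$, so $d_j\ge 1$ for those $j$.

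The endpoint $j=n_2$ needs separate care, since we may have $n_2=q$ and then $d_{n_2}=0$. That endpoint corresponds to $k=n_2-1$, so it falls in the middle case only when $n_2-1\ge n_1$, which always holds. To handle it, I will re-index the middle range as $n_1\le k\le n_2-2$, giving $j=k+1\le n_2-1$ as above. I will then treat $k=n_2-1$ directly: in $S'$, position $n_2-1$ follows the original symbols $1,\dots,n_2-1$ with symbol $n_1$ removed, so $d'_{n_2-1}=d_{n_2-1}-1=d_{n_2}$, which is $\ge 0$ because the $n_2^{\text{th}}$ symbol is a closing bracket. Equivalently, with the case boundaries of Proposition \ref{prop:action_association} read correctly, $d'_{n_2-1}=d_{n_2}$ is simply nonnegative by Proposition \ref{prop:depthdycknonnegative}.

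Combining the three cases gives $d'_k\ge 0$ for all $k$ and $d'_{2n-2}=0$, so $S'$ is a Dyck word.
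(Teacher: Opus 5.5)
Your route is the paper's: push the depths through Proposition \ref{prop:action_association} and use that depths are at least $1$ strictly inside a subword. You also do two things the paper does not. You prove the converse characterization (nonnegative depths ending at $0$ imply a Dyck word), which the paper uses only tacitly. And you notice that the boundary $k=n_2-1$ needs care. Indeed, Proposition \ref{prop:action_association} is misstated there: for $\bm{(\,)\,(\,)}$ with $n_1=1$, $n_2=2$ it predicts $d'_1=d_2-1=-1$, whereas $S'=\bm{(\,)}$ has $d'_1=1$. The paper's one-line proof glosses over this point.

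Your repair of that boundary, however, is itself off by one. The first $n_2-1$ symbols of $S'$ are the original symbols $1,\dots,n_2+1$ with the $n_1^{\text{th}}$ and $n_2^{\text{th}}$ removed, so $d'_{n_2-1}=d_{n_2+1}$, not $d_{n_2}$. The quantity $d_{n_2-1}-1=d_{n_2}$ that you computed is $d'_{n_2-2}$, which your other cases already cover. For example, take $S=\bm{(\,(\,)\,)}$ with $n_1=1$, $n_2=3$: then $S'=\bm{(\,)}$ and $d'_2=0=d_4$, whereas your formula gives $d_3=1$.

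The same example exposes a second slip. You read off $d'_{2n-2}=d_{2n}=0$ from the case $k\ge n_2$, but that case is empty when $n_2\ge 2n-1$. Here your own formula would even give $d'_{2n-2}=1$.

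Both slips are repaired at once by observing that $d'_k=d_{k+2}$ for every $k\ge n_2-2$: the first $k$ symbols of $S'$ are then the first $k+2$ symbols of $S$ minus one opening and one closing bracket. This range covers $k=n_2-1$, where it gives $d_{n_2+1}\ge 0$. Since $n_2\le 2n$, it also always covers $k=2n-2$, giving $d'_{2n-2}=d_{2n}=0$. With this correction your argument is complete.
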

\begin{proof}
According to Proposition \ref{prop:action_association}, and since subwords satisfy $d_k \geq 1$ except at their boundaries, the lowest depth value one can meet within the subword after the expulsion is performed remains zero. Hence, in agreement with propositions \ref{prop:depthdycknonnegative} and \ref{prop:action_association} the result of the transformation remains a Dyck word.
\end{proof}

\noindent In line with what has just been written, notice that in Example \ref{ex:dyck_path_expulsion} one expulses two brackets within the same subword, leading to a Dyck word, while in Example \ref{ex:dyck_path_expulsion_nondyck} one expulses two brackets which do not belong to the same subword, leading to a string that is not a Dyck word.

\begin{proposition}\label{prop:denombrement-dyck}
The number of ways to iteratively expulse all the brackets pairs of a non-empty Dyck word starting at each step from the last opening bracket, in such a way that after each expulsion the resulting string is a Dyck word, is equal to the product of its opening depths.
\end{proposition}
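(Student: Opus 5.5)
I would prove the statement by induction on the half-length $n$ of the Dyck word, counting expulsions as choices of a pair of bracket positions (not as distinct resulting strings). The whole argument rests on one local fact. If we always expel the last opening bracket, the number of admissible partners for it is exactly its opening depth, and the remaining opening depths are unaffected.

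\textbf{Step 1 (local count).} Let $S$ be a non-empty Dyck word of length $2n$. Let its last opening bracket be at position $n_1$, with opening depth $d^\uparrow_n = d_{n_1}$.
\begin{itemize}
\item Every symbol to the right of position $n_1$ is a closing bracket, because position $n_1$ holds the last opening bracket.
\item Since $d_{2n}=0$ by criterion 2 of Definition \ref{def:dyck_word}, there are exactly $d_{n_1}$ such closing brackets.
\item Along them the depth decreases as $d_{n_1}-1, d_{n_1}-2, \dots, 0$.
\end{itemize}

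Now choose any of these closing brackets, say at position $n_2>n_1$, and apply Proposition \ref{prop:action_association}:
\begin{itemize}
\item For $k<n_1$, the depths are unchanged.
\item For $n_1\le k<n_2$, we get $d'_k=d_{k+1}-1$. Here $d_{k+1}\ge 1$, since $k+1\le n_2\le 2n$ and the depth only reaches $0$ at position $2n$ along this final run of closing brackets. Hence $d'_k\ge 0$.
\item For $k\ge n_2$, we get $d'_k=d_{k+2}\ge 0$.
\end{itemize}
So all depths of $S'$ are non-negative and the final depth is $0$, which makes $S'$ a Dyck word. (Alternatively, one can note that positions $n_1$ through $2n$ lie in the last subword and invoke Proposition \ref{prop:expulsion}.) Hence there are exactly $d^\uparrow_n$ admissible expulsions at this step. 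No other choice exists, because the partner must be a closing bracket to the right of position $n_1$.

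\textbf{Step 2 (the opening depths of $S'$).} Every other opening bracket lies at a position $k<n_1$, and these depths are unchanged by Proposition \ref{prop:action_association}. Hence the list of opening depths of $S'$ is $(d^\uparrow_1,\dots,d^\uparrow_{n-1})$, whichever partner was chosen.

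\textbf{Step 3 (induction).} The base case is $n=1$: the word $\bm{(\,)}$ has a single expulsion, and its product of opening depths is $1$. For the inductive step, each of the $d^\uparrow_n$ choices yields a Dyck word $S'$ of length $2n-2$ with opening depths $(d^\uparrow_1,\dots,d^\uparrow_{n-1})$. If $S'$ is non-empty, it admits $\prod_{k<n} d^\uparrow_k$ complete expulsion sequences by the induction hypothesis. Multiplying gives $\prod_{k=1}^n d^\uparrow_k$ for $S$.

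The main obstacle is Step 1: showing that every closing bracket to the right of the last opening bracket is an admissible partner, and that there are exactly $d^\uparrow_n$ of them. I would handle it with the explicit depth bookkeeping of Proposition \ref{prop:action_association} rather than relying on Proposition \ref{prop:expulsion} alone. I would also make explicit that the count is over choices of bracket positions, since all $d^\uparrow_n$ choices produce the same string.
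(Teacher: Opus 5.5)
Your proof is correct and follows the paper's argument: exactly $d^\uparrow_n$ closing brackets follow the last opening one, each choice gives a Dyck word, and the earlier opening depths are left unchanged by Proposition~\ref{prop:action_association}; your induction on $n$ formalizes the paper's ``after the $k^{\text{th}}$ expulsion'' iteration, and your counting over position pairs matches how the paper counts routes. Your explicit depth bookkeeping is valid here, for a reason worth noting: Proposition~\ref{prop:action_association} as stated is off by one at $k=n_2-1$ (the true value is $d_{n_2+1}$, not $d_{n_2}-1$; e.g.\ for $\bm{(\,)\,(\,)}$ with $n_1=1$, $n_2=2$ it would give $d'_1=-1$), but every symbol after the last opening bracket is closing, so $d_{n_2+1}=d_{n_2}-1$ and the formula is exact in your setting.
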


\begin{proof}
The last opening bracket in a $2n$--long Dyck word has exactly $d^\uparrow_n$ closing brackets on its right. Indeed, the depth after this bracket is $d^\uparrow_n$ and must return to zero at the end of the word. According to Proposition \ref{prop:expulsion}, expulsing the last opening bracket and any of these $d^\uparrow_n$ closing brackets lead to a Dyck word. 

According to Proposition \ref{prop:action_association}, solely depths after the last opening bracket will change upon an expulsion transformation while all the depths (hence, the opening depths) before the last opening bracket will remain the same.

After the expulsion has been performed, the last opening bracket has $d^\uparrow_{n-1}$ closing brackets on its right. After the $k^\text{th}$ expulsion, the last opening bracket has $d^\uparrow_{n-k}$ choices on its right. In conclusion, the total number of possibilities is given by the product of all the opening depths.
\end{proof}

\noindent This proposition will be particularly relevant in the second part of our contribution, where we will see how we can connect Dyck language with a corollary of the time-independent fermionic Wick theorem.

\begin{example}
Let $S$ be the $\bm{(\;(\;)\;)\;(\;)}$ Dyck word of length six. Its depth list is {\normalfont$(0,1,2,1,0,1,0)$}. Its opening depth list is {\normalfont$(1,2,1)$}. Therefore, the number of ways to expulse all the brackets pairs from $S$ according to {\normalfont Proposition \ref{prop:denombrement-dyck}} is the product of elements from the opening depth list, i.e., $1\cdot 2\cdot 1 = 2$. The two possible expulsion routes in agreement with the procedure depicted in {\normalfont Proposition \ref{prop:denombrement-dyck}} are reproduced in {\normalfont Figure \ref{fig:dyckexpulsiontotale}}.

\begin{figure}[h!]\centering
\begin{tikzpicture}
\node (start) at (.5,1.5) {Start};
\draw[->, red] (start) to (.5,.4) ;

\node (step_0)      at (0,0)  
  {$\stackrel{1}{\bm{(}}\; 
    \stackrel{2}{\bm{(}}\; 
    \stackrel{3}{\bm{)}}\; 
    \stackrel{4}{\bm{)}}\; 
    \tcr{\stackrel{5}{\bm{(}}}\; 
    \tcb{\stackrel{6}{\bm{)}}} $};

\node (ex1) at (1.5,-1)    {$(\tcr{5},\tcb{6})$};
\node (step_1)      at (0,-2)  
  {$\stackrel{1}{\bm{(}}\; 
    \tcr{\stackrel{2}{\bm{(}}}\; 
    \tcb{\stackrel{3}{\bm{)}}\; 
    \stackrel{4}{\bm{)}}} $};

\node (s2) at (0,-3.5) {};

\node (ex2)       at (-1.5,-3.5)    {$(\tcr{2},\tcb{3})$};
\node (step_2)    at (-2.5,-4.5)  
  {$\tcr{\stackrel{1}{\bm{(}}}\; 
    \tcb{\stackrel{4}{\bm{)}}} $};
\node (ex2')      at (1.5,-3.5)    {$(\tcr{2},\tcb{4})$};
\node (step_2')   at (2.5,-4.5)  
  {$\tcr{\stackrel{1}{\bm{(}}}\;  
    \tcb{\stackrel{3}{\bm{)}}} $};

\node (ex3)       at (-3.5,-5.4)    {$(\tcr{1},\tcb{4})$};
\node (ex3')      at (3.5,-5.4)     {$(\tcr{1},\tcb{3})$};
\node (step_3)    at (-2.5,-6)  {End};
\node (step_3')   at (2.5,-6)   {End};

\draw[->] (step_0) to (step_1) node[below=-1cm, left] {Step 1} ;
\draw[-, anchor=base] (step_1) --  (s2.base);
\draw[->,anchor=base] (s2.base) -- (step_2)  node[midway, below=.23cm] {Step 2};
\draw[->,anchor=base] (s2.base) -- (step_2') node[midway, below=.23cm] {Step 2'};

\draw[->] (step_2)  -- (step_3);
\draw[->] (step_2') -- (step_3');

\draw[->] (step_0.south)  to [bend right=45]  (ex1.west);
\draw[->] (s2.base)       to [bend left=25]   (ex2.east);
\draw[->] (s2.base)       to [bend right=25]  (ex2'.west);
\draw[->] (step_2.south)  to [bend left=25]   (ex3.east);
\draw[->] (step_2'.south) to [bend right=25]  (ex3'.west);
\end{tikzpicture}

\caption{Illustration of a total expulsion of brackets the $\bm{(\;(\;)\;)\;(\;)}$ Dyck word according to the procedure depicted in Proposition \ref{prop:denombrement-dyck}. In Step 1 we expulse the last opening bracket together with the only closing one on its right. In the following step the now-last opening bracket has two possible choices for pairing with a closing one, leading to Step 2 and to Step 2'. The last pairings are trivial — see Remark \ref{rem:emptyword}. The two possible expulsion routes are therefore $((5-6),(2-3),(1-4))$ and $((5-6),(2-4),(1-3))$.}
\label{fig:dyckexpulsiontotale}
\end{figure}
\end{example}

\section{Connecting second quantization with Dyck language}

Comparing Proposition \ref{prop:criteria_nullity_EVC_2nd} and Definition \ref{def:dyck_word}, we immediately notice a one-to-one correspondence between criteria defining a Dyck word and criteria ensuring the expectation value nullity of a chain of second quantization operators relatively to the physical vacuum. Indeed, both are defined on two criteria: a syntactic criterion and a criterion related to what is contained in the chain/string. This observation strongly invites us to suggest a ``translation'' of the operators chains using the $\mathcal{L}_1$ alphabet.

\newpage \subsection{Nullity criteria relatively to the physical vacuum}
When expectation values are evaluated relatively to the physical vacuum, one set of criteria appears.

\begin{definition}[$\mathcal{L}_1$ translation]\label{def:L1}
$\mathcal{L}_1$-translating a chain of individual second quantization operators consists in replacing each annihilation operator by a opening bracket {\normalfont ``}$\bm{(}${\normalfont ''}, and each creation operator by a closing bracket {\normalfont ``}$\bm{)}${\normalfont ''}:
\begin{align*}
\forall r \in \llbracket 1, L \rrbracket, \quad &\hat{a}_r \longrightarrow_{\mathcal{L}_1} \bm{(},\\ &\hat{a}^\dag_r \longrightarrow_{\mathcal{L}_1} \;\bm{)}.
\end{align*}
\end{definition}

\begin{example}\label{ex:translation_physical_vacuum_L1}
Let $(p,q,r,s)$ be a 4-tuple of integers, all belonging to $\llbracket 1,L\rrbracket$. Consider the three following chains of operators: $\hat{a}_p\hat{a}_q\hat{a}^\dag_s\hat{a}^\dag_r$, $\hat{a}_p\hat{a}_q\hat{a}^\dag_s\hat{a}^\dag_r$, and $\hat{a}_p^\dag\hat{a}_q\hat{a}^\dag_s\hat{a}_r$. Their respective translation in $\mathcal{L}_1$ are:
\begin{equation*}
\hat{a}_p\hat{a}_q\hat{a}^\dag_s\hat{a}^\dag_r \longrightarrow_{\mathcal{L}_1} \bm{(())},
\quad
\hat{a}_p\hat{a}_q\hat{a}_s\hat{a}^\dag_r \longrightarrow_{\mathcal{L}_1} \bm{((()},
\quad
\hat{a}_p^\dag\hat{a}_q\hat{a}^\dag_s\hat{a}_r \longrightarrow_{\mathcal{L}_1}\, \bm{)()(}.
\end{equation*}
\end{example}

\begin{proposition}\label{prop:nullity_L1}%[Nullity criterion in physical vacuum]
A chain of second quantization operators has a zero expectation value relatively to the physical vacuum if its $\mathcal{L}_1$--translation results into a string of bracket(s) that is not a Dyck word.
\end{proposition}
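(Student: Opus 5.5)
The plan is to reduce Proposition \ref{prop:nullity_L1} to Proposition \ref{prop:criteria_nullity_EVC_2nd} by a criterion-by-criterion matching. The $\mathcal{L}_1$-translation of Definition \ref{def:L1} is a letter-by-letter map: annihilation operators go to ``$\bm{(}$'' and creation operators go to ``$\bm{)}$''. Positions are preserved, so reading the chain $\hat{C}$ right to left is the same as reading its translation $S$ right to left. Any prefix or suffix count of operators of each kind therefore equals the corresponding count of each bracket type.

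First, I will check that criterion 1 of Definition \ref{def:dyck_word} for $S$ is exactly criterion P1 of Proposition \ref{prop:criteria_nullity_EVC_2nd} for $\hat{C}$. Consider any segment read from the right, that is, the $M$ rightmost symbols of $S$ or operators of $\hat{C}$. In that segment, the number of ``$\bm{(}$'' equals $m_a$, the number of annihilation operators, and the number of ``$\bm{)}$'' equals $m_c$, the number of creation operators. Hence ``no more $\bm{(}$ than $\bm{)}$'' holds for every such segment if and only if $m_a\le m_c$ holds for every $M$, which is P1.

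Second, criterion 2 of Definition \ref{def:dyck_word} compares the total counts of the two bracket types. These totals are $M_a$ and $M_c$, so criterion 2 holds exactly when P2 holds.

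To conclude, suppose $S$ is not a Dyck word. Then at least one of criteria 1 and 2 of Definition \ref{def:dyck_word} fails. By the two equivalences above, at least one of P1 and P2 fails for $\hat{C}$, and Proposition \ref{prop:criteria_nullity_EVC_2nd} then gives $\braket{\hat{C}}=0$.

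There is no real obstacle here. The only point needing care is the orientation convention: criterion 1 of Definition \ref{def:dyck_word} is phrased right to left, matching P1, and annihilation is deliberately mapped to the opening bracket. I will state explicitly that no reversal of the word is involved, so the correspondence of partial counts is literal. I will also note that the empty chain translates to the empty word, which is a Dyck word by Remark \ref{rem:emptyword}, so that case never triggers the statement.
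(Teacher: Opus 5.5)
Your proposal is correct and takes the same route as the paper: the statement is reduced to Proposition \ref{prop:criteria_nullity_EVC_2nd} by matching criteria 1 and 2 of Definition \ref{def:dyck_word} with P1 and P2 under the letter-by-letter $\mathcal{L}_1$ map. The paper only asserts this one-to-one correspondence of criteria, whereas you spell out the equality of suffix counts explicitly (and note the empty-chain case), which makes the argument more transparent.
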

\begin{proof} To any chain of second quantization operators corresponds one and only one string in $\mathcal{L}_1$. On the other hand, there is a one-to-one correspondence between the criteria defining Dyck words and criteria establishing the nullity of expectation values of second quantization operators chains relatively to the physical vacuum. 
\end{proof}

\begin{example}
Consider the three chains of operators defined in {\normalfont Example \ref{ex:translation_physical_vacuum_L1}}. Once translated in $\mathcal{L}_1$, the last two chains do not lead to a Dyck word. Therefore, their expectation value relatively to the physical vacuum is both equal to zero in both cases. On the other hand, the $\mathcal{L}_1$ translation of the first chain is a Dyck word. Consequently, the expectation value of that chain relatively to the physical vacuum can be non-zero.
\end{example}

\noindent The translation is not mandatory to check the nullity. Indeed, one could directly use Proposition \ref{prop:criteria_nullity_EVC_2nd}. Yet, the Dyck language has the advantage to rely on the ability to identify a well-nested string of brackets more easily.

\subsection{Nullity criteria relatively to the Fermi vacuum}
\subsubsection{Translating chains of individual \textit{o}- and \textit{v}-operators}

\noindent In the Fermi vacuum, two sets of mirrored rules appears: One for \textit{o-}operators, and one for \textit{v-}operators. It invites us to use two different translations, one for occupied and one for virtual spaces, respectively. For this seek, one first needs to build two strings of bracket.

Let $\mathcal{L}_2 \coloneqq \{\, \bm{)}, \bm{(}, \bm{]},\bm{[}\,\}$ be the alphabet consisting in the four following symbols: the opening round bracket ``$\bm{(}$'', the closing round bracket ``$\bm{)}$'', the opening square bracket ``$\bm{[}$'', and the closing square bracket ``$\bm{]}$''. 

\begin{definition}[$\mathcal{L}_2$ translation]\label{def:L2}
$\mathcal{L}_2$-translating a chain of individual second quantization operators consists in replacing each \textit{o-}annihilation operator by a closing square bracket {\normalfont ``}$\bm{]}${\normalfont ''}, and each \textit{o-}creation operator by an opening square bracket {\normalfont ``}$\bm{[}${\normalfont ''}, i.e., using the
\begin{align*}
\forall i \in \llbracket 1, N \rrbracket, \quad &\hat{o}_i \longrightarrow_{\mathcal{L}_2} \;\bm{]},\\ &\hat{o}^\dag_i \longrightarrow_{\mathcal{L}_2} \bm{[}
\end{align*}
translation rules, and in replacing each \textit{v-}annihilation operator by a opening round bracket {\normalfont ``}$\bm{(}${\normalfont ''}, and each \textit{v-}creation operator by a closing round bracket {\normalfont ``}$\bm{)}${\normalfont ''}, i.e., using the
\begin{align*}
\forall a \in \llbracket N+1, L \rrbracket, \quad 
&\hat{v}_a      \longrightarrow_{\mathcal{L}_2}   \bm{(},\\ 
&\hat{v}^\dag_a \longrightarrow_{\mathcal{L}_2}\; \bm{)}
\end{align*}
translation rules.
\end{definition}

\begin{definition}[Splitting]\label{def:splitting}
Let $S$ be a string written using elements of $\mathcal{L}_2$. $S$ can be split into two strings, $\text{S}_o$ and $\text{S}_v$, separated by a $\&$ symbol. $S_o$ is composed of all the \textit{o-}operators of $S$ met in the same order as in $S$ while $S_v$ is composed of all the \textit{v-}operators of $S$ met in the same order as in $S$:
\begin{equation*}
    S \longrightarrow_{\mathcal{S}} S_o\, \&\, S_v
\end{equation*}
\end{definition}
\noindent For the sake of clarity, in Definition \ref{def:splitting}, we use the ``$\&$'' symbol as a meta-syntactic symbol for separating $S_o$ and $S_v$. This notation is used rather than, for instance, $(S_o , S_v)$, in order to avoid a mixing of symbols — in this case, round brackets — used for both syntactic and meta-syntactic purpose in the same expression.

\begin{example}\label{ex:split-translation-dyck}
    Let $(i,j,k,l)$ be a 4-tuple of integers, all belonging to $\llbracket 1,N\rrbracket$. Let $(a,b)$ be a couple of integers, both belonging to $\llbracket N+1,L\rrbracket$. Consider the $\hat{o}_i^\dag\hat{v}_a\hat{o}_j^\dag\hat{o}_k\hat{v}_b^\dag\hat{o}_l$ chain. Its $\mathcal{L}_2$--translation, followed by the splitting of the round and square bracket sequences is illustrated in Figure \ref{fig:ex_translation_L2}.
\end{example}
\begin{figure}[h!]
\begin{equation*}
 \tcb{\hat{o}^\dag_i}
 \tcr{\hat{v}_a}
 \tcb{\hat{o}^\dag_j\hat{o}_k}
 \tcr{\hat{v}^\dag_b}
 \tcb{\hat{o}_l}
 \longrightarrow_{\mathcal{L}_2} 
 \tcb{\bm{[}}\; 
 \tcr{\bm{(}}\; 
 \tcb{\bm{[\;]}}\; 
 \tcr{\bm{)}}\; 
 \tcb{\bm{]}}
 \longrightarrow_{\mathcal{S}}
 \tcb{\bm{[\;[\;]\;]}} \;\&\;  \tcr{\bm{(\;)}} 
\end{equation*}
\caption{Illustration of the translation $(\longrightarrow _{\mathcal{L}_2})$ of $\hat{o}_i^\dag\hat{v}_a\hat{o}_j^\dag\hat{o}_k\hat{v}_b^\dag\hat{o}_l$ in the $\mathcal{L}_2$ alphabet performed according to Definition \ref{def:L2}, followed by its splitting $(\longrightarrow _{\mathcal{S}})$ performed according to Definition \ref{def:splitting}.}
\label{fig:ex_translation_L2}
\end{figure}
\begin{definition}[$\mathcal{L}_2$--split translation]\label{def:split-translation}
    The consecutive application of the $\mathcal{L}_2$ translation and the splitting operation will be termed ``$\mathcal{L}_2$--split translation''.
\end{definition}
\begin{example}\label{ex:split-translation-non-dyck}
Let $(i,j,k)$ be a 3-tuple of integers, all belonging to $\llbracket 1,N\rrbracket$. Let $(a,b,c)$ be a 3-tuple of integers, all belonging to $\llbracket N+1,L\rrbracket$. Consider the $\hat{o}_i^\dag\hat{v}_a\hat{o}_j^\dag\hat{v}_b^\dag\hat{v}_c^\dag\hat{o}_k$ chain. Its $\mathcal{L}_2$--split translation is illustrated in Figure \ref{fig:ex_translation_L2_nondyck}.
\begin{figure}[h!]
\begin{equation*}
 \tcb{\hat{o}_i^\dag}
 \tcr{\hat{v}_a}
 \tcb{\hat{o}_j^\dag}
 \tcr{\hat{v}_b^\dag}
 \tcr{\hat{v}_c^\dag}
 \tcb{\hat{o}_k}
 \longrightarrow _{\mathcal{L}_2,\mathcal{S}}
 \tcb{\bm{[\;[\;]}} \;\&\;  \tcr{\bm{(\;)\;)}} 
\end{equation*}
\caption{Illustration of the $\mathcal{L}_2$--split translation $(\longrightarrow_{\mathcal{L}_2,\mathcal{S}})$ of $\hat{o}_i^\dag\hat{v}_a\hat{o}_j^\dag\hat{v}_b^\dag\hat{v}_c^\dag\hat{o}_k$.}
\label{fig:ex_translation_L2_nondyck}
\end{figure}
\end{example}

\begin{proposition}\label{prop:nullity_translation_Fermi}
    A chain of second quantization operators has a zero expectation value relatively to the Fermi vacuum if at least one of the two strings of its $\mathcal{L}_2$-split translation is not a Dyck word.
\end{proposition}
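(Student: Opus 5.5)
The plan is to reduce the statement to Proposition \ref{prop:criteria_nullity_EVC_fermi_2nd}. I will show that the Dyck-word criteria of Definition \ref{def:dyck_word}, applied to $S_v$ and to $S_o$, coincide with the pairs (P1, P2) and (P3, P4), respectively. The whole argument then reduces to checking that the bracket assignments in Definition \ref{def:L2} were chosen so that each criterion matches exactly.

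First, the splitting of Definition \ref{def:splitting} keeps the relative order of the \textit{o}-operators and of the \textit{v}-operators. Hence $S_v$ is exactly the translation of the \textit{v}-subsequence of $\hat{C}$, and $S_o$ is exactly the translation of the \textit{o}-subsequence. These are the orders used in the chains $\hat{C}_v$ and $\hat{C}_o$ in the proof of Proposition \ref{prop:criteria_nullity_EVC_fermi_2nd}.

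For $S_v$, \textit{v}-annihilation is sent to ``$\bm{(}$'' and \textit{v}-creation to ``$\bm{)}$'', exactly as in the $\mathcal{L}_1$ translation. Criterion 1 of Definition \ref{def:dyck_word} (reading right to left, never more ``$\bm{(}$'' than ``$\bm{)}$'') is therefore literally P1. Criterion 2 (equal counts) is literally P2. So if $S_v$ is not a Dyck word, P1 or P2 fails, and $\braket{\hat{C}}_{\Psi_0}=0$. Equivalently, one can apply Proposition \ref{prop:nullity_L1} to the \textit{v}-part, since \textit{v}-operators act on $\Psi_0$ as on the physical vacuum.

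For $S_o$, the roles are flipped. An \textit{o}-creation is sent to an opening ``$\bm{[}$'' and an \textit{o}-annihilation to a closing ``$\bm{]}$''. Criterion 1, read right to left, then says there are never more \textit{o}-creations than \textit{o}-annihilations, which is P3. Criterion 2 is P4. So failure of the Dyck property for $S_o$ gives failure of P3 or P4, and again $\braket{\hat{C}}_{\Psi_0}=0$.

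The main point needing care is that the Dyck criteria on the square-bracket string are stated for the opening/closing symbols ``$\bm{[}$''/``$\bm{]}$'' in place of ``$\bm{(}$''/``$\bm{)}$''. Remark \ref{rem:round-square-dyck} justifies this substitution. With that settled, the orientation reversal between \textit{o}- and \textit{v}-operators is matched exactly by the reversed bracket assignment, and no further computation is needed.
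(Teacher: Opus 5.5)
Your proposal is correct and is essentially the paper's own proof. The paper likewise invokes Remark~\ref{rem:round-square-dyck} to treat the square brackets as a valid Dyck alphabet, then appeals to the one-to-one correspondence between the two criteria of Definition~\ref{def:dyck_word} and criteria P1--P4 of Proposition~\ref{prop:criteria_nullity_EVC_fermi_2nd}; your version makes explicit what the paper leaves implicit, namely that the splitting preserves the relative order within each subspace and that the reversed bracket assignment for \textit{o}-operators matches the reversed inequality in P3.
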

\begin{proof}
    First, one should notice that, according to Remark \ref{rem:round-square-dyck}, well-nested strings of square brackets are Dyck words, and strings of square brackets that are not well nested are not Dyck words.

    To any chain of second quantization operators corresponds one and only one pair of strings after $\mathcal{L}_2$--split translation. On the other hand, there is a one-to-one correspondence between the criteria defining Dyck words — see Definition \ref{def:dyck_word} — and criteria establishing the nullity of expectation values of \textit{o-}operators and \textit{v-}operators relatively to the Fermi vacuum — see \ref{prop:criteria_nullity_EVC_fermi_2nd}. 
\end{proof}

\begin{example}
    Consider the $\hat{o}_i^\dag\hat{v}_a\hat{o}_j^\dag\hat{o}_k\hat{v}_b^\dag\hat{o}_l$ chain introduced in {\normalfont Example \ref{ex:split-translation-dyck}}. Both strings obtained \textit{via} the $\mathcal{L}_2$-split translation are Dyck words. Therefore, according to {\normalfont Proposition \ref{prop:nullity_translation_Fermi}} expectation value of the chain relatively to the Fermi vacuum can be either zero or non-zero. On the other hand, the $\hat{o}_i^\dag\hat{v}_a\hat{o}_j^\dag\hat{v}_b^\dag\hat{v}_c^\dag\hat{o}_k$ chain introduced in {\normalfont Example \ref{ex:split-translation-non-dyck}}, after $\mathcal{L}_2$-split translation, does not result into a pair of Dyck words. Therefore, according to {\normalfont Proposition \ref{prop:nullity_translation_Fermi}} its expectation value relatively to the Fermi vacuum is equal to zero.
\end{example}
\noindent We have introduced the $\mathcal{L}_2$ alphabet in order to supplement $\mathcal{L}_1$ so that one can handle two sets of operators defined relatively to two separate parts of $\mathcal{B}$. One can extend this supplementation as follows:
\begin{equation*}
\forall i \in \llbracket 3,L\rrbracket, \, \mathcal{L}_i \coloneqq \bigcup_{j\in\llbracket 1,i\rrbracket} \left\{ \,\bm{)}_j, \bm{(}_j \,\right\}
\end{equation*}
together with the corresponding splitting: Let $\hat{C}$ be a chain of second quantization operators. Its $\mathcal{L}_i$-split translation will have the following structure:
\begin{equation*}
\hat{C} \longrightarrow_{\mathcal{L}_{i},\mathcal{S}} S_1\, \& \cdots \& \,S_i.
\end{equation*}
This extension may be useful in the framework of complete active space self-consistent field (CAS-SCF) method in which the working space is split into three parts.

\subsubsection{Excitation and deexcitation operators}

\noindent In this paragraph we will be concerned exclusively with chains solely composed of excitation and deexcitation operators introduced in \eqref{eq:ExcOp} and \eqref{eq:DeexcOp}, respectively.

\begin{proposition}\label{prop:nullity_L2}
    Let $\hat{C}$ be a chain of (de)excitation operators. The two strings — $S_o$ and $S_v$ — of the $\mathcal{L}_2$-split translation of $\hat{C}$ have the same sequence of opening and closing brackets.
\end{proposition}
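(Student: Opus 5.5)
The approach is a direct inspection of the $\mathcal{L}_2$ translation of each building block, followed by induction on the number of (de)excitation operators in $\hat{C}$. The key observation is that each (de)excitation operator contributes exactly one \textit{o}-operator and one \textit{v}-operator. Moreover, the bracket \emph{type} (opening or closing) it contributes to $S_o$ coincides with the one it contributes to $S_v$.

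First, I check the two building blocks using Definition \ref{def:L2}. For $\hat{E}_i^a = \hat{v}_a^\dag\hat{o}_i$, the \textit{v}-creation operator gives the closing round bracket $\bm{)}$ and the \textit{o}-annihilation operator gives the closing square bracket $\bm{]}$. So $\hat{E}_i^a \longrightarrow_{\mathcal{L}_2} \bm{)\;]}$, which splits into a single closing bracket in each string. For $\hat{D}_a^i = \hat{o}_i^\dag\hat{v}_a$, we get $\bm{[}$ from $\hat{o}_i^\dag$ and $\bm{(}$ from $\hat{v}_a$. So $\hat{D}_a^i$ contributes a single opening bracket to each of $S_o$ and $S_v$.

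Next, write $\hat{C} = \hat{X}_1\hat{X}_2\cdots\hat{X}_m$, where each $\hat{X}_k$ is an excitation or deexcitation operator. By Definition \ref{def:splitting}, the splitting keeps the \textit{o}-symbols in their original order and the \textit{v}-symbols in their original order. Each $\hat{X}_k$ is a contiguous two-letter block containing exactly one symbol of each kind. Therefore the $k^{\text{th}}$ symbol of $S_o$ and the $k^{\text{th}}$ symbol of $S_v$ both come from $\hat{X}_k$. Both strings have length $m$, and by the first step the two symbols at position $k$ are of the same type: both closing if $\hat{X}_k$ is an excitation, both opening if it is a deexcitation. Hence $S_o$ and $S_v$ coincide once square brackets are identified with round ones, as allowed by Remark \ref{rem:round-square-dyck}.

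The only point needing care is the claim that the relative order within each block does not matter. In $\hat{v}^\dag\hat{o}$ the \textit{v}-symbol precedes the \textit{o}-symbol, but the two land in different strings, so only the block index $k$ fixes their positions. I would state this explicitly, formally via induction on $m$: appending a block $\hat{X}_{m+1}$ appends one symbol to $S_o$ and one to $S_v$, both of the same type. This is routine rather than a genuine obstacle.
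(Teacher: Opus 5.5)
Your proof is correct and follows essentially the same route as the paper: translate $\hat{D}_a^i$ and $\hat{E}_i^a$ under $\mathcal{L}_2$ so that each contributes one opening (respectively, closing) bracket to both $S_o$ and $S_v$, then use the order-preserving splitting to conclude. Your explicit observation that the $k^{\text{th}}$ symbols of $S_o$ and $S_v$ both come from the $k^{\text{th}}$ operator of the chain makes rigorous a step the paper states only informally.
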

\begin{proof}
    According to definitions \ref{def:L2} to \ref{def:split-translation}, the $\mathcal{L}_2$-split translation of excitation and deexcitation operators reads 
    \begin{align*}
        \forall (i,a) \in \llbracket 1, N \rrbracket\times \llbracket N+1, L \rrbracket, \quad 
        &\hat{D}_a^i=\hat{o}_i^\dag\hat{v}_a \longrightarrow_{ \mathcal{L}_2} \bm{[\;(} \;\longrightarrow_{ \mathcal{S}}   \bm{[}\;\&\;\bm{(},\\ 
        &\hat{E}_i^a=\hat{v}_a^\dag\hat{o}_i\longrightarrow_{ \mathcal{L}_2} \,\;\bm{)\;]}  \longrightarrow_{ \mathcal{S}}\; \bm{)}\;\&\;\bm{]}.
    \end{align*}
In a chain composed of such operators, the $\mathcal{L}_2$--translation of each deexcitation (respectively, excitation) operator pairs an opening (respectively, closing) square bracket with an opening (respectively, closing) round bracket directly to its right (respectively, left).

After the splitting is performed, in both cases if one encounters an opening (respectively, closing) bracket at the left of ``$\&$'', an opening (respectively, closing) bracket is also encountered at the right of ``$\&$''.

Thus, if a string is composed solely of excitation and deexcitation operators, the two strings of its $\mathcal{L}_2$-split translation will have the same sequence of opening and closing brackets.
\end{proof}
\begin{example}\label{ex:DEL2S}
Let $(i,j)$ be a couple of integers, both belonging to $\llbracket 1,N\rrbracket$. Let $(a,b)$ be a couple of integers, both belonging to $\llbracket N+1,L\rrbracket$. Consider the $\hat{D}_a^i\hat{E}_j^b$ and $\hat{E}_j^b\hat{D}_a^i$ chains. Their $\mathcal{L}_2$-split translation are
\begin{align*}
 \hat{D}_a^i
 \hat{E}_j^b
&=
 \hat{o}^\dag_i
 \hat{v}_a
 \hat{v}^\dag_b
 \hat{o}_j
\longrightarrow_{\mathcal{L}_2,\mathcal{S}} 
 \bm{[\,]} \;\&\;  \bm{()},
\\
 \hat{E}_j^b
 \hat{D}_a^i
&=
 \hat{v}^\dag_b
 \hat{o}_j
 \hat{o}^\dag_i
 \hat{v}_a
\longrightarrow_{\mathcal{L}_2,\mathcal{S}} \,\;
 \bm{][} \;\&\; \bm{)(}.
\end{align*}
\end{example}
\noindent A direct consequence of Proposition \ref{prop:nullity_L2} is that if one of the two strings obtained by $\mathcal{L}_2$--split translation is (respectively, is not) a Dyck word, the other one is (respectively, is not) a Dyck word. This redundancy of information in the $\mathcal{L}_2$--split translation for chains of such operators strongly invites us to translate the chain in $\mathcal{L}_1$ instead of $\mathcal{L}_2$.

\begin{definition}[Fermi-$\mathcal{L}_1$ translation]\label{def:F_L1}
Fermi-$\mathcal{L}_1$ translating a chain of (de)excitation operators consists in replacing each deexcitation operator by a opening bracket {\normalfont ``}$\bm{(}${\normalfont ''}, and each excitation operator by a closing bracket {\normalfont ``}$\bm{)}${\normalfont ''}:
\begin{align*}
\forall (i,a) \in \llbracket 1, N \rrbracket\times \llbracket N+1, L \rrbracket, \quad 
&\hat{D}_a^i \stackrel{\mathcal{F}}{\longrightarrow}_{\mathcal{L}_1}   \bm{(},\\ 
&\hat{E}_i^a \stackrel{\mathcal{F}}{\longrightarrow}_{\mathcal{L}_1} \;\bm{)}.
\end{align*}
\end{definition}
\noindent We named this translation ``Fermi-$\mathcal{L}_1$ translation'' in order not to confuse this operation with the $\mathcal{L}_1$ translation introduced in Definition \ref{def:L1}. Whilst the arguments of a Fermi-$\mathcal{L}_1$ translation are (de)excitation operators, those of the $\mathcal{L}_1$ translation are creation and annihilation second quantization operators. The difference is illustrated in Example \ref{ex:L1FermiL1} where the same chain is translated into two different strings in the $\mathcal{L}_1$ alphabet when seen as chains of (de)excitation operators or as chains of creation and annihilation operators.

\begin{example}\label{ex:L1FermiL1}
Let $(i,j)$ be a couple of integers, both belonging to $\llbracket 1,N\rrbracket$. Let $(a,b)$ be a couple of integers, both belonging to $\llbracket N+1,L\rrbracket$. Consider the $\hat{D}_a^i\hat{E}_j^b$ and $\hat{E}_j^b\hat{D}_a^i$ chains. Their $\mathcal{L}_1$ translation and Fermi-$\mathcal{L}_1$ translation are
\begin{align*}
\bm{)\;(\;)\;(}
\; {}_{\mathcal{L}_1}\!\!\longleftarrow
 \hat{o}^\dag_i
 \hat{v}_a
 \hat{v}^\dag_b
 \hat{o}_j
 &=
 \hat{D}_a^i
 \hat{E}_j^b
 \stackrel{\mathcal{F}}{\longrightarrow}_{\mathcal{L}_1}
 \bm{(\;)},
\\
 \bm{)\;(\;)\;(}
 \;{}_{\mathcal{L}_1}\!\!\longleftarrow
 \hat{v}^\dag_b
 \hat{o}_j
 \hat{o}^\dag_i
 \hat{v}_a
& =
 \hat{E}_j^b
 \hat{D}_a^i
 \stackrel{\mathcal{F}}{\longrightarrow}_{\mathcal{L}_1}\,\;
 \bm{)(}.
\end{align*}
We can notice that the $\mathcal{L}_1$ translation of both chains is the same while their Fermi-$\mathcal{L}_1$ translations are different.
\end{example}
\begin{proposition}\label{prop:nullity_F_L1}
The expectation value of a chain of (de)excitation operator(s) relatively to the Fermi vacuum is zero if its Fermi-$\mathcal{L}_1$ translation is not a Dyck word.
\end{proposition}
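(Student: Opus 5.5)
The plan is to reduce this statement to Proposition \ref{prop:nullity_translation_Fermi} through Proposition \ref{prop:nullity_L2}. The key observation is that the Fermi-$\mathcal{L}_1$ translation of a chain $\hat{C}$ of (de)excitation operators coincides, symbol by symbol, with each of the two strings of the $\mathcal{L}_2$-split translation of $\hat{C}$ (the square one being read with $\bm{[}\mapsto\bm{(}$ and $\bm{]}\mapsto\bm{)}$). Once this is established, the non-Dyck property carries over directly.

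First I will write $\hat{C}=\hat{X}_1\cdots\hat{X}_m$ with each $\hat{X}_k$ either some $\hat{D}_a^i=\hat{o}_i^\dag\hat{v}_a$ or some $\hat{E}_i^a=\hat{v}_a^\dag\hat{o}_i$. Expanding each factor gives a chain of $2m$ individual operators. Each factor contributes exactly one \textit{o}-operator and one \textit{v}-operator, so after splitting, $S_v$ has exactly $m$ symbols, and its $k^{\text{th}}$ symbol is the one coming from $\hat{X}_k$. The same holds for $S_o$.

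Second, I will read off from the translation rules in the proof of Proposition \ref{prop:nullity_L2} that $\hat{X}_k=\hat{D}$ yields $\bm{(}$ in $S_v$, and $\hat{X}_k=\hat{E}$ yields $\bm{)}$ in $S_v$. This is exactly the Fermi-$\mathcal{L}_1$ rule of Definition \ref{def:F_L1}, so $S_v$ equals the Fermi-$\mathcal{L}_1$ translation of $\hat{C}$.

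Third, suppose the Fermi-$\mathcal{L}_1$ translation is not a Dyck word. Then $S_v$ is not a Dyck word, and Proposition \ref{prop:nullity_translation_Fermi} gives $\braket{\hat{C}}_{\Psi_0}=0$. By Proposition \ref{prop:nullity_L2}, $S_o$ fails to be a Dyck word as well, which offers an independent route through criteria P3 and P4 of Proposition \ref{prop:criteria_nullity_EVC_fermi_2nd}, but this is not needed.

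The only delicate point is the bookkeeping in the first step. I need to make sure the splitting preserves the order of the factors, and that the Dyck criteria of Definition \ref{def:dyck_word}, which are read right to left with the annihilator mapped to $\bm{(}$, match the orientation used in criteria P1 and P2 for \textit{v}-operators. This is guaranteed because the $\mathcal{L}_2$ rule $\hat{v}_a\to\bm{(}$, $\hat{v}_a^\dag\to\bm{)}$ is identical to the $\mathcal{L}_1$ rule, so no reversal occurs and the argument is essentially immediate.
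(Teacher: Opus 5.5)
Your proposal is correct and follows essentially the same route as the paper. The paper's proof likewise identifies the Fermi-$\mathcal{L}_1$ translation $S$ with the string $S_v$ of the $\mathcal{L}_2$-split translation, using Proposition~\ref{prop:nullity_L2} and Definition~\ref{def:F_L1}, and then concludes with Proposition~\ref{prop:nullity_translation_Fermi}; your order-preservation and orientation checks spell out what the paper leaves implicit.
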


\begin{proof}
Let $\hat{C}$ be a chain of (de)excitation operators. Let $S$ be the Fermi-$\mathcal{L}_1$ translation of $\hat{C}$ and $S_o\;\&\;S_v$ be the $\mathcal{L}_2$-split translation of $\hat{C}$.
According to Proposition \ref{prop:nullity_L2} and Definition \ref{def:F_L1}, $S$ and $S_v$ are the same. According to Proposition \ref{prop:nullity_L2}, $S_o$ and $S_v$ have the same sequence of opening and closing brackets. Thus, if $S$ is not a Dyck word, according to Proposition \ref{prop:nullity_translation_Fermi} the expectation value of $\hat{C}$ relatively to the Fermi vacuum is equal to zero.
\end{proof}
\begin{example}\label{ex:L1FermiL1vsDEL2S}
Consider the two chains in {\normalfont Example \ref{ex:L1FermiL1}}. According to {\normalfont Proposition \ref{prop:nullity_F_L1}}, we see that the first of the two chains is susceptible of having a non-zero expectation value relatively to the Fermi vacuum while one can readily assert the nullity of expectation value of the second chain relatively to the Fermi vacuum. 
\end{example}
\begin{remark}
One can see that the conclusions drawn in {\normalfont Example \ref{ex:L1FermiL1vsDEL2S}} are identical to those one could draw from using {\normalfont Proposition \ref{prop:nullity_translation_Fermi}} for comparing the two chains using their $\mathcal{L}_2$--split translation given in {\normalfont Example \ref{ex:DEL2S}}. However, using Fermi-$\mathcal{L}_1$ translation for going from the chain to the expectation value implies writing only one string of brackets, which should be compared with the $\mathcal{L}_2$--split translation — that implies two strings.
\end{remark}
{\begin{remark}
{Deexcitation operators are to the Fermi vacuum what annihilation operators are to the physical vacuum. Indeed, while on one hand applying an annihilation operator to the physical vacuum results in the null state, on the other hand applying a deexcitation operator to the Fermi vacuum results in $\ket{0_N}$. Similarly, excitation operators are to the Fermi vacuum what creation operators are to the physical vacuum.}
\end{remark}}{}

\subsubsection{Arbitrary creation-annihilation operator pairs}

\noindent Until now the subspace in which target spin-orbitals belong when pointed by second quantization operators was specified. Let $(r,s)$ be a couple of integers, both belonging to $\llbracket 1,L\rrbracket$. An arbitrary creation-annihilation pair $\hat{a}^\dag_r\hat{a}_s$ — in that specific order — is not always an excitation or deexcitation operator. In fact, four possibilities arise depending on the occupation number of the spin-orbital ``$r$'' and ``$s$'' are pointing at. The $\mathcal{L}_2$ translation of $\hat{a}^\dag_r\hat{a}_s$ therefore depends on the value of $r$ and $s$, as illustrated in Table \ref{tab:adaga_occ_vir}.

\begin{table}[h!]
\def\arraystretch{1.5}
\begin{center}
\begin{tabular}{ccc}
\hline 
\diagbox[width=\dimexpr \textwidth/8+2\tabcolsep\relax, height=1cm]{$r$}{$s$}   & Occupied    & Virtual     \\ 
\hline 
Occupied & $\bm{[\;]}$  & $\bm{[\;(}$ \\ 
Virtual  & $\bm{)\;]}$  & $\bm{)\;(}$ \\ 
\hline 
\end{tabular} 
\end{center}
\caption{$\mathcal{L}_2$ translation of $\hat{a}^\dag_r\hat{a}_s$ according to the type of spinorbital $r$ and $s$ are pointing at.}
\label{tab:adaga_occ_vir}
\end{table}

\noindent In order to establish sufficient conditions for the nullity of the expectation value of a chain containing solely (de)excitation operators and arbitrary creation and annihilation operator pairs through an $\mathcal{L}_2$ translation, four possibilities have to be explored. To avoid four translations in $\mathcal{L}_2$ alphabet the $\mathcal{L}_1$ alphabet is now extended in order to take into account this new pattern: Let $\mathcal{L}_{1,1} \coloneqq \mathcal{L}_1 \cup \{\bm{-}\}$ be the alphabet consisting in the opening round bracket ``$\bm{(}$'', the closing round bracket ``$\bm{)}$'', and the dash symbol ``$\bm{-}$''.

\begin{definition}[Fermi-$\mathcal{L}_{1,1}$ translation]\label{def:F_L1-extend}
Fermi-$\mathcal{L}_{1,1}$ translating a chain containing solely (de)excitation operators and one arbitrary creation-annihilation operator pair consists in replacing each deexcitation operator by a opening bracket {\normalfont ``}$\bm{(}${\normalfont ''}, each excitation operator by a closing bracket {\normalfont ``}$\bm{)}${\normalfont ''}
\begin{align*}
\forall (i,a) \in \llbracket 1, N \rrbracket\times \llbracket N+1, L \rrbracket, \quad 
\hat{D}_a^i              &\stackrel{\mathcal{F}}{\longrightarrow}_{\mathcal{L}_{1,1}}   \bm{(},\\ 
\hat{E}_i^a              &\stackrel{\mathcal{F}}{\longrightarrow}_{\mathcal{L}_{1,1}} \,\;\bm{)},
\end{align*}
and the arbitrary creation-annihilation pair by a dash {\normalfont ``}$\bm{-}${\normalfont ''}, e.g.,
\begin{align*}
\hat{a}_r^\dag\hat{a}_s  &\stackrel{\mathcal{F}}{\longrightarrow}_{\mathcal{L}_{1,1}}   \bm{-},
\end{align*}
where ``$r$'' and ``$s$'' are two non-definite indices about which we know that each can be valued from $1$ to $L$.
\end{definition}

\begin{example}\label{ex:translation_Fermi_L-1-1}
Let $(i,j,k)$ be a 3-tuple of integers, all belonging to $\llbracket 1,N\rrbracket$. 
Let $(a,b,c)$ be a 3-tuple of integers, all belonging to $\llbracket N+1,L\rrbracket$. 
Let $\hat{a}^\dag_r$ and $\hat{a}_s$ be two second quantization operators, each pointing at a spin-orbital with non-definite occupation number relatively to the Fermi vacuum.
Consider the three following chains of second quantization operators: $\hat{D}_a^i\hat{a}^\dag_r\hat{a}_s\hat{E}_j^b$, $\hat{D}_a^i\hat{a}^\dag_r\hat{a}_s\hat{E}_j^b\hat{E}_k^c $, and $\hat{D}_a^i\hat{D}_b^j\hat{a}^\dag_r\hat{a}_s$. Their Fermi-$\mathcal{L}_{1,1}$ translation reads 
\begin{equation*}
\hat{D}_a^i\hat{a}^\dag_r\hat{a}_s\hat{E}_j^b
\stackrel{\mathcal{F}}{\longrightarrow}_{\mathcal{L}_{1,1}}\bm{(-)}, \quad
\hat{D}_a^i\hat{a}^\dag_r\hat{a}_s\hat{E}_j^b\hat{E}_k^c
\stackrel{\mathcal{F}}{\longrightarrow}_{\mathcal{L}_{1,1}} \bm{(-))}, \quad
\hat{D}_a^i\hat{D}_b^j\hat{a}^\dag_r\hat{a}_s
\stackrel{\mathcal{F}}{\longrightarrow}_{\mathcal{L}_{1,1}}\bm{((-}.
\end{equation*}
\end{example}

\begin{proposition}\label{prop:nullity_F_L1_extend}
Let $\hat{C}$ be a chain containing (de)excitation operators and one arbitrary creation-annihilation operator pair. Let $S$ be its Fermi-$\mathcal{L}_{1,1}$ translation.
The expectation value of $\hat{C}$ relatively to the Fermi vacuum is zero if none of the following criteria is met:    
\begin{itemize}
    \item[P1] The $S$ string is a Dyck word after replacing the dash by an opening round bracket.
    \item[P2] The $S$ string is a Dyck word after replacing the dash by an closing round bracket.
    \item[P3] The $S$ string is a Dyck word after removing the dash.
\end{itemize}
\end{proposition}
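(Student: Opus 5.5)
Since $\hat{C}$ is linear in the pair $\hat{a}^\dag_r\hat{a}_s$ and each of $r,s$ is either occupied or virtual, we split into the four cases of Table \ref{tab:adaga_occ_vir}. In each case we look at the $\mathcal{L}_2$--split translation $S_o\,\&\,S_v$ of $\hat{C}$ and compare it with the string obtained from $S$ by the substitution prescribed in one of P1--P3. We then conclude with Proposition \ref{prop:nullity_translation_Fermi}. Concretely, we show that if none of P1, P2, P3 holds, then in each of the four cases at least one of $S_o$, $S_v$ fails to be a Dyck word.

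Recall from the proof of Proposition \ref{prop:nullity_L2} that every deexcitation contributes $\bm{[}\,\&\,\bm{(}$ and every excitation contributes $\bm{]}\,\&\,\bm{)}$. The pair's contribution, read off Table \ref{tab:adaga_occ_vir}, is as follows.

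\begin{itemize}
\item[(i)] $r$ occupied, $s$ virtual. The pair is $\bm{[\;(}$, a genuine deexcitation. Both $S_o$ and $S_v$ equal (up to bracket shape) the string $S$ with the dash replaced by an opening bracket. This is the case of P1.
\item[(ii)] $r$ virtual, $s$ occupied. The pair is $\bm{)\;]}$, an excitation. Both strings equal $S$ with the dash replaced by a closing bracket. This is the case of P2.
\item[(iii)] Both occupied. The pair is $\bm{[\;]}$ in $S_o$ and contributes nothing to $S_v$. So $S_v$ is exactly $S$ with the dash removed, which is the case of P3.
\item[(iv)] Both virtual. The pair is $\bm{)\;(}$ in $S_v$ and nothing in $S_o$. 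So $S_o$ is $S$ with the dash removed, again P3.
\end{itemize}

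In cases (i) and (ii), the failure of P1 (respectively P2) means the relevant split string is not a Dyck word, so Proposition \ref{prop:nullity_translation_Fermi} gives a zero expectation value. In cases (iii) and (iv), the failure of P3 means that $S_v$ (respectively $S_o$) with the dash deleted is not a Dyck word. This split string consists only of (de)excitation brackets, so again Proposition \ref{prop:nullity_translation_Fermi} gives zero.

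Since every admissible choice of $r$ and $s$ yields zero, the expectation value vanishes. If $r$ and $s$ are meant as summed indices, the operator splits into the four corresponding sub-sums by linearity, and each sub-sum vanishes.

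The main point needing care is the bookkeeping in cases (iii) and (iv). We must check that the split string which does not receive the pair's contribution really coincides with $S$ after deleting the dash, with brackets of a single type. This follows because the (de)excitation operators contribute identical open/close patterns to both $S_o$ and $S_v$ (Proposition \ref{prop:nullity_L2}). We only need the string that omits the pair, so the fact that the other string contains the extra $\bm{[\;]}$ or $\bm{)\;(}$ plays no role. A second small check is orientation: $\bm{[}$ is opening in $S_o$, so in case (i) $\bm{[\;(}$ really acts as an opening bracket in both strings, matching the Fermi-$\mathcal{L}_1$ convention $\hat{D}\to\bm{(}$.
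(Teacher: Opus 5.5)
Your proof is correct. In the two cases where both indices lie in the same subspace, it follows a different route from the paper.

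\textbf{The paper's route.} The paper treats the both-occupied case through the split string that \emph{carries} the pair. If $S_o$ were a Dyck word, expulsing the adjacent $\bm{[\;]}$ would leave a Dyck word by Proposition \ref{prop:expulsion}. By contraposition, ``$S'_o$ is not a Dyck word'' forces $S_o$ not to be one. The both-virtual case is only asserted to behave ``similarly'', with ``precautions'' on the pair's position, since $\bm{)\;(}$ is not an expulsion in the paper's sense. It is not actually worked out before the cases are merged into $(\textbf{C1}\land\textbf{C2}\land(\textbf{C3}\lor\textbf{C4}))\Rightarrow\braket{\hat{C}}_{\Psi_0}=0$. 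That merge does need case 4. \textbf{C1} and \textbf{C2} cannot stand in for it, because they hold automatically whenever the dash-deleted string is balanced.

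\textbf{Your route.} You look at the split string the pair does \emph{not} touch. Directly from Definition \ref{def:L2}, that string is $S$ with the dash deleted. So Proposition \ref{prop:nullity_translation_Fermi} applies at once, no transformation lemma is needed, and cases (iii) and (iv) are exact mirror images.

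\textbf{Comparison.} Your argument is shorter, and it genuinely closes the both-virtual case that the paper leaves implicit. In exchange, the paper's route ties the result to the expulsion machinery it reuses in Part II, and shows that the string containing the pair fails as well.
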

\begin{proof}
Let $\hat{a}^\dag_r$ and $\hat{a}_s$ be two second quantization operators, each pointing at a spin-orbital with non-definite occupation number relatively to the Fermi vacuum. Let $S^E$ (respectively, $S^D$) be the string obtained by replacing the dash by a closing (respectively, opening) bracket in $S$. Let $S_o\;\&\;S_v$ be the $\mathcal{L}_2$-translation of $\hat{C}$. Let and $S'_o\;\&\;S'_v$ be the $\mathcal{L}_2$-translation of $\hat{C}$ stripped of the $\hat{a}_r^\dag\hat{a}_s$ pair.

The negation of the $P1$, $P2$, and $P3$ propositions can be written as:
\begin{equation}\label{eq:negP1P2P3}
(\neg \textit{P1}) \Longleftrightarrow {\bf C1}, \quad 
(\neg \textit{P2}) \Longleftrightarrow {\bf C2}, \quad 
(\neg \textit{P3}) \Longleftrightarrow ({\bf C3} \lor {\bf C4})
\end{equation}
where \textbf{C1} stand for ``$S^D$ is not a Dyck word'', \textbf{C2} for ``$S^E$ is not a Dyck word'', \textbf{C3} for ``$S_o'$ is not a Dyck word'', \textbf{C4} for ``$S_v'$ is not a Dyck word''.

When evaluating the expectation value of $\hat{C}$ relatively to the Fermi vacuum, one can only encounter four possibilities:

\textit{Case 1}. 
The $\hat{a}_r^\dag\hat{a}_s$ pair corresponds to a deexcitation operator.
The Fermi-$\mathcal{L}_{1,1}$ translation of $\hat{C}$ is the same as its Fermi-$\mathcal{L}_{1}$ translation, and has been termed $S^D$ above. According to Proposition \ref{prop:nullity_F_L1}, if $S^D$ is not a Dyck word (\textbf{C1}) the expectation value of $\hat{C}$ is equal to zero:

\begin{equation}\label{eq:C1Zero}
\textbf{C1} \Longrightarrow \braket{\hat{C}}_{\Psi_0} = 0.
\end{equation}

\textit{Case 2}. 
The $\hat{a}_r^\dag\hat{a}_s$ pair corresponds to an excitation operator. As in the previous case, if the Fermi-$\mathcal{L}_{1}$ translation of $\hat{C}$, named $S^E$ above, is not a Dyck word (\textbf{C2}) the expectation value of $\hat{C}$ is equal to zero:

\begin{equation}\label{eq:C2Zero}
\textbf{C2} \Longrightarrow \braket{\hat{C}}_{\Psi_0} = 0.
\end{equation}

\textit{Case 3}. Both operators revealed to be \textit{o-}operators. If $S_o$ is a Dyck word, expulsing the two square brackets corresponding to $\hat{a}_r^\dag$ and $\hat{a}_s$ results in a Dyck word — in agreement with Proposition \ref{prop:expulsion}. The contraposition of what precedes allows to assert that if $S_o'$ is not a Dyck word, then $S_o$ is not a Dyck word too. Therefore, according to \ref{prop:criteria_nullity_EVC_fermi_2nd}, if $S_o'$ is not a Dyck word (\textbf{C3}) the expectation value of $\hat{C}$ is equal to zero:

\begin{equation}\label{eq:C3Zero}
\textbf{C3} \Longrightarrow \braket{\hat{C}}_{\Psi_0} = 0.
\end{equation}

\textit{Case 4}. If both operators revealed to be \textit{v-}operators, a similar relationship that the one found in the previous case exists. Some precautions have to be taken on the position of $\hat{a}_r^\dag\hat{a}_s$ pair in $\hat{C}$. However a more advanced study of this case is not necessary for proving our proposition: Combining \eqref{eq:C1Zero}, \eqref{eq:C2Zero}, and \eqref{eq:C3Zero} one finds
\begin{equation}\label{eq:negP1P2P3prebilan}
(\textbf{C1}\land \textbf{C2}\land (\textbf{C3}\lor \textbf{C4})) \Longrightarrow \braket{\hat{C}}_{\Psi_0} = 0.
\end{equation}
From \eqref{eq:negP1P2P3} we already know that
\begin{equation}\label{eq:negP1P2P3prebilan2}
((\neg \textit{P1})\land (\neg \textit{P2}) \land (\neg \textit{P3})) \Longleftrightarrow (\textbf{C1}\land\textbf{C2}\land(\textbf{C3}\lor \textbf{C4})).
\end{equation}
Combining \eqref{eq:negP1P2P3prebilan} with \eqref{eq:negP1P2P3prebilan2} we get
\begin{equation*} 
((\neg \textit{P1})\land (\neg \textit{P2}) \land (\neg \textit{P3})) \Longrightarrow \braket{\hat{C}}_{\Psi_0} = 0,
\end{equation*}
which is the desired result.
\end{proof}
\noindent{Our sufficient conditions for the nullity of the expectation value relatively to the Fermi vacuum of a chain containing (de)excitation operators and one arbitrary creation-annihilation operator pair using Proposition \ref{prop:nullity_F_L1_extend} is illustrated in Figure \ref{fig:nullity_F_L1_extend}.}
\begin{figure}[h!]\centering
\begin{tikzpicture}

\node (step_0) at (0,0)  {$\hat{C}$};
\node (step_1) at (3,0)  {$S$};
\node (step_2E) at (7,0)  {$S^E$};
\node (step_2D) at (7,-1)  {$S^D$};
\node (step_2') at (7,+1)  {$S'_o\& S'_v$};

\node (b2) at (8.5,0) {};

\node (step_3_D)  at (13.7,1) {$\braket{\hat{C}}_{\Psi_0} \in \{-1,0,1\}$};
\node (step_3_nD) at (13,-1)  {$\braket{\hat{C}}_{\Psi_0} = 0$};

\draw[->] (7.8,1)  to node[above] {One of them}  node[below] {is a Dyck word} (step_3_D)  ;
\draw[->] (7.8,-1) to node[above] {None of them} node[below] {is a Dyck word} (step_3_nD);

\draw[->] (step_0) to node[above] {Fermi-$\mathcal{L}_{1,1}$} (step_1);
\draw[->] (step_1) to                node[above] {``--''$\rightarrow$ ``)''}   (step_2E);
\draw[->] (step_1) to[bend right=30] node[above] {``--''$\rightarrow$ ``(''}   (step_2D);
\draw[->] (step_1) to[bend  left=30] node[above] {``--''$\rightarrow$ ``\,'' } (step_2');

\draw (7.5,1.5) -- ++(.3,0) -- ++(0,-3) -- ++(-.3,0);

\end{tikzpicture}
\caption{Sufficient conditions for the nullity of the expectation value relatively to the Fermi vacuum of a $\hat{C}$ chain containing (de)excitation operators and one arbitrary creation-annihilation operator pair. In the diagram, $S$ is written in the $\mathcal{L}_{1,1}$ alphabet. After removing the dash or replacing it by an opening or closing bracket, the resulting string — i.e., $S^E$, $S^D$, or $S_o'\&S_v'$ — is written in the $\mathcal{L}_{1}$ alphabet.}
\label{fig:nullity_F_L1_extend}
\end{figure}

\begin{example}
Consider the three translations in {\normalfont Example \ref{ex:translation_Fermi_L-1-1}}.
The $\bm{(\,-\,)}$ string is a Dyck word if the dash is removed. Based on {Proposition {\normalfont \ref{prop:nullity_F_L1_extend}}} through its third criterion, one cannot decide whether the expectation value relatively to the Fermi vacuum of $\hat{D}_a^i\hat{a}^\dag_r\hat{a}_s\hat{E}_j^b$ is zero or is different from zero.

The second string, i.e., $\bm{(\,-\,)\,)}$ is a Dyck word if the dash is replaced by a opening bracket. Based on {Proposition {\normalfont \ref{prop:nullity_F_L1_extend}}} through its first criterion, one cannot decide whether the expectation value relatively to the Fermi vacuum of $\hat{D}_a^i\hat{a}^\dag_r\hat{a}_s\hat{E}_j^b\hat{E}_k^c$ is zero or is different from zero.

The last translation, i.e., $\bm{(\,(\,(\,-}$, cannot be turned into a Dyck word using any transformation introduced in {\normalfont Proposition \ref{prop:nullity_F_L1_extend}}. According to {\normalfont Proposition \ref{prop:nullity_F_L1_extend}}, one can assert without further investigation that the expectation value relatively to the Fermi vacuum of $\hat{D}_a^i\hat{D}_b^j\hat{a}^\dag_r\hat{a}_s$ is equal to zero.
\end{example}

\noindent In a similar way we extended the supplementation of $\mathcal{L}_1$ into $\mathcal{L}_2$ in order to account for the possibility of more than one splitting, we propose to extend the supplementation of $\mathcal{L}_1$ into $\mathcal{L}_{1,1}$ and to generalize this beyond $\mathcal{L}_1$:
\begin{equation*}
\forall i \in \llbracket 1,L\rrbracket,\, \forall k\geq 1, \, \mathcal{L}_{i,k} \coloneqq \mathcal{L}_i \cup \left\{ {\bm{-}}_k   \right\}.
\end{equation*}
where ``$\bm{-}_k$'' is a shorthand notation for a pile of $k$ dashes: for $(k=1)$ we have that ``$\bm{-}_1$'' is simply ``$\bm{-}$''; for $(k=2)$ we have that ``$\bm{-}_2$'' is ``$\bm{=}$'', etc. A pile of $k$ dashes is the translation of a chain composed, reading from the left to the right, of $k$ creation operators followed by $k$ annihilation operators, all pointing at arbitrary spin-orbital functions from $\mathcal{B}$. Therefore, one should not confuse an expression like ``$\bm{=}$'' with ``$\bm{{-}-}$''. For example, consider the $(p,q,r,s)$ sequence of integers. The ``$\bm{=}$'' expression will be the Fermi-$\mathcal{L}_{1,2}$ translation of the $\hat{a}^\dag_p\hat{a}^\dag_q\hat{a}_r\hat{a}_s$ chain while the ``${\bm{{-}-}}$'' expression will be the Fermi-$\mathcal{L}_{1,1}$ translation of the $\hat{a}^\dag_p\hat{a}_q\hat{a}_r^\dag\hat{a}_s$ chain. This extension should reveal to be useful when considering many-body operators. However, we could expect the $\mathcal{L}_{1,k}$ supplementations of $\mathcal{L}_1$ to be the most practical extension among those. 

\subsubsection{Augmented second quantization operators}

When studying second quantization relatively to any one-determinant state $\Psi$, \textit{augmented} second quantization operators can be used: To each spin-orbital $\varphi_r$ in $\mathcal{B}$ we associate an augmented creation operator, i.e.,
\begin{equation*}
{}^\Psi\hat{p}_r^\dag \coloneqq n_r^\Psi\hat{a}_r^\dag + (1-n_r^\Psi)\hat{a}_r,
\end{equation*}
\noindent and an augmented annihilation operator, i.e.,
\begin{equation*}
{}^\Psi\hat{p}_r \coloneqq (1-n_r^\Psi)\hat{a}_r^\dag +  n_r^\Psi\hat{a}_r.
\end{equation*}
\noindent Those operators are designed to behave relatively to the one-determinant state as creation and annihilation operators behave relatively to the physical vacuum. We know that, for each spin-orbital $\varphi_r$ in $\mathcal{B}$ we have, relatively to the physical vacuum, that
\begin{equation}
\bra{\;} \hat{a}_r^\dag = 0\bra{\;}, \quad \hat{a}_r \ket{\;} = \ket{0_0}.
\end{equation}
\noindent The augmented second quantization operators pointing at each spin-orbital $\varphi_r$ in $\mathcal{B}$ behave similarly but relatively to the Fermi vacuum:
\begin{equation}
\bra{\Psi_0}\hat{p}_r^\dag = 0\bra{\Psi_0}, \quad (\hat{p}_r \ket{\Psi_0} = \ket{0_{N-1}})\lor(\hat{p}_r \ket{\Psi_0} = \ket{0_{N+1}}).
\end{equation}
We dropped the ``$\Psi_0$'' superscript for the sake of readability. Any chain of \textit{o-} and \textit{v-}operators can be rewritten as a chain of augmented operators. For every $i$ integer between $1$ and $N$,
\begin{equation*}
\hat{o}_i = \hat{p}_i^\dag, \quad\hat{o}_i^\dag = \hat{p}_i,
\end{equation*}
and for every $a$ integer between $(N+1)$ and $L$,
\begin{equation*}
\hat{v}_a = \hat{p}_a, \quad\hat{v}_a^\dag = \hat{p}_a^\dag.
\end{equation*}

\begin{definition}[Augmented-$\mathcal{L}_1$ translation]\label{ex:A_L1}
Augmented-$\mathcal{L}_1$ translating a chain of augmented operators consists in replacing each augmented annihilation operator by an opening bracket {\normalfont ``}$\bm{(}${\normalfont ''}, and each augmented creation operator by a closing bracket {\normalfont ``}$\bm{)}${\normalfont ''}:
\begin{align*}
\forall r \in \llbracket 1, L \rrbracket, \quad 
&\hat{p}_r     {\longrightarrow}_{A-\mathcal{L}_1} \bm{(},\\ 
&\hat{p}^\dag_r {\longrightarrow}_{A-\mathcal{L}_1} \;\bm{)}.
\end{align*}
\end{definition}

\begin{example}\label{ex:translation_A_L1}
Let $(p,q,r,s)$ be a 4-tuple of integers, all belonging to $\llbracket 1,L\rrbracket$. Consider the $\hat{p}_p\hat{p}_q\hat{p}^\dag_s\hat{p}^\dag_r$ chain. Its augmented-$\mathcal{L}_1$ translation is:
\begin{equation*}
\hat{p}_p\hat{p}_q\hat{p}^\dag_s\hat{p}^\dag_r {\longrightarrow}_{A-\mathcal{L}_1} \bm{((\;))}.
\end{equation*}
\end{example}

\begin{proposition}\label{prop:nullity_A_L1}%[Nullity criterion in physical vacuum]
A chain of augmented operators has a zero expectation value relatively to the Fermi vacuum if its augmented-$\mathcal{L}_1$ translation results into a string of bracket(s) that is not a Dyck word.
\end{proposition}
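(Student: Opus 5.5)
The plan is to reduce the statement to Proposition \ref{prop:nullity_translation_Fermi} by comparing the augmented-$\mathcal{L}_1$ translation with the $\mathcal{L}_2$-split translation of the same chain. Every augmented operator acting relative to $\Psi_0$ is either an \textit{o}-operator or a \textit{v}-operator, via the identities $\hat{o}_i=\hat{p}_i^\dag$, $\hat{o}_i^\dag=\hat{p}_i$, $\hat{v}_a=\hat{p}_a$, $\hat{v}_a^\dag=\hat{p}_a^\dag$. Checking the four cases shows that the augmented-$\mathcal{L}_1$ translation of a single operator agrees with its $\mathcal{L}_2$ translation up to the shape of the brackets, that is, up to forgetting whether a bracket is round or square:
\begin{itemize}
\item $\hat{v}_a=\hat{p}_a\to\bm{(}$ in both translations;
\item $\hat{v}_a^\dag=\hat{p}_a^\dag\to\bm{)}$ in both;
\item $\hat{o}_i^\dag=\hat{p}_i\to\bm{(}$ versus $\bm{[}$;
\item $\hat{o}_i=\hat{p}_i^\dag\to\bm{)}$ versus $\bm{]}$.
\end{itemize}
So if $S$ is the augmented-$\mathcal{L}_1$ translation of $\hat{C}$ and $S_o\,\&\,S_v$ its $\mathcal{L}_2$-split translation, then $S$ is obtained by turning square brackets of the $\mathcal{L}_2$ translation into round ones. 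Equivalently, $S$ is a shuffle of $S_v$ with a round-bracket copy of $S_o$, and the interleaving records the positions in $\hat{C}$.

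The key combinatorial step is the contrapositive: if both $S_o$ and $S_v$ are Dyck words (after Remark \ref{rem:round-square-dyck}, square brackets read as round ones), then any shuffle $S$ of the two is a Dyck word. I will argue this with the two criteria of Definition \ref{def:dyck_word}.
\begin{itemize}
\item Criterion 2: the numbers of opening and closing brackets in $S$ are the sums of those in $S_o$ and $S_v$, hence equal.
\item Criterion 1: take any suffix of $S$, read right to left. It is the union of a suffix of $S_o$ and a suffix of $S_v$. In each of these the number of openings does not exceed the number of closings, so the same holds for their union.
\end{itemize}
Hence if $S$ is not a Dyck word, at least one of $S_o$, $S_v$ is not a Dyck word. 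Proposition \ref{prop:nullity_translation_Fermi} then gives $\braket{\hat{C}}_{\Psi_0}=0$.

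The main subtlety is bookkeeping rather than difficulty. One has to make sure that the chain of augmented operators is understood relative to $\Psi_0$, so that each $\hat{p}_r$ reduces to a single \textit{o}- or \textit{v}-operator and not to a sum. One also has to note that Proposition \ref{prop:nullity_translation_Fermi} applies to chains of individual \textit{o}- and \textit{v}-operators, which is exactly what the rewriting produces. The converse fails: a shuffle can be Dyck while its components are not, as with $\bm{[(\,])}$-type patterns. This is why only the sufficient direction is claimed.
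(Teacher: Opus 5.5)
Your proof is correct, but it takes a different route from the paper's. The paper's proof is the one-liner ``follow the proof of Proposition \ref{prop:criteria_nullity_EVC_2nd}''. It works directly on the augmented chain, with $\Psi_0$ playing the role of the physical vacuum: $\hat{p}_r\ket{\Psi_0}$ vanishes and $\bra{\Psi_0}\hat{p}_r^\dag$ vanishes, and the suffix-counting argument is rerun. Done carefully, that argument needs the electron number replaced by the number of quasi-particles (holes plus particles) relative to $\Psi_0$. Each $\hat{p}^\dag$ raises this number by one and each $\hat{p}$ lowers it by one, unless the state vanishes.

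You never introduce such a quantity. You observe that the augmented-$\mathcal{L}_1$ translation is the $\mathcal{L}_2$ translation with square brackets made round, i.e. a shuffle of $S_o$ and $S_v$. You then prove that shuffles of Dyck words are Dyck words and import nullity from Proposition \ref{prop:nullity_translation_Fermi}. What your route buys is that all the operator-level reasoning is inherited from already-established criteria, namely the reordering into $\hat{C}_o$, $\hat{C}_v$ behind the Fermi-vacuum criteria. It also turns the paper's subsequent remark, that the augmented criterion is weaker than the $\mathcal{L}_2$-split one, into a formal one-way implication. The paper's route is self-contained and reflects why the augmented operators were introduced: they behave toward $\Psi_0$ as the $\hat{a}$'s do toward the physical vacuum, so no $o$/$v$ separation is needed.

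One small slip in your closing aside: $\bm{[\,(\,]\,)}$ does not show the converse failing, because there $S_o=\bm{[\,]}$ and $S_v=\bm{(\,)}$ are both Dyck words. A genuine witness is $\bm{[\,)\,(\,]}$, e.g. $\hat{p}_i\hat{p}_a^\dag\hat{p}_b\hat{p}_j^\dag$ with $i,j$ occupied and $a,b$ virtual, whose round-bracket image $\bm{(\,)\,(\,)}$ is Dyck while $S_v=\bm{)\,(}$ is not. The paper's own $\bm{[\,[\,)\,)}$ also works: there neither component is Dyck.
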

\begin{proof} 
This can be proved by following the proof of {\normalfont Proposition \ref{prop:criteria_nullity_EVC_2nd}}.
\end{proof}

\begin{example}
Consider the chain of augmented operators defined in {\normalfont Example \ref{ex:translation_A_L1}}. Its augmented-$\mathcal{L}_1$ translation is a Dyck word. Consequently, the expectation value of that chain relatively to the Fermi vacuum can be non-zero.
\end{example}

\begin{remark}
If operators in the chain are \textit{o-} or \textit{v-}operators, this criterion is less strong than the $\mathcal{L}_2$-split translation nullity criterion. Indeed, if, relatively to the Fermi vacuum, $\varphi_p$ and $\varphi_q$ are both occupied spin orbitals and $\varphi_r$ and $\varphi_s$ are both virtual spin orbitals, the $\mathcal{L}_2$-split translation of $\hat{p}_p\hat{p}_q\hat{p}^\dag_s\hat{p}^\dag_r$ is $\bm{((}\;\&\;\bm{))}$. The resulting pair of strings is not a pair of Dyck words, hence the expectation value of $\hat{p}_p\hat{p}_q\hat{p}^\dag_s\hat{p}^\dag_r$ relatively to the Fermi vacuum is equal to zero. This conclusion cannot be obtained from the augmented-$\mathcal{L}_1$ translation alone.
\end{remark}

\subsection{Nullity criteria based on depth}
Previous sections established nullity criteria of the expectation value of a chain of second quantization operators based on the sequence of opening and closing brackets for a given translation. We insist on the fact that if the translation of the chain is a Dyck word, or a pair of Dyck words, the expectation value relatively to a given vacuum can be zero or non-zero. In this paragraph we introduce a new nullity criterion based on depths. 

\begin{proposition}\label{prop:criteria_depth_physical}
Let $\hat{C}$ be a chain of second quantization operators whose $\mathcal{L}_1$-translation is a Dyck word.
The expectation value of $\hat{C}$ relatively to the physical vacuum is equal to zero if one element in the list of depths corresponding to its $\mathcal{L}_1$-translation is strictly greater than $L$.
\end{proposition}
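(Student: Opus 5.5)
We relate the depth of the $\mathcal{L}_1$-translation to the number of electrons in the intermediate states produced when $\hat{C}$ acts on the physical vacuum. Reading $\hat{C}$ from right to left is the natural order of action on $\ket{\;}$. Reading the $\mathcal{L}_1$-translation from left to right gives depths $d_k$. The plan is to show that the depth at each cut equals the number of particles in the state obtained by letting the operators to the right of that cut act on $\ket{\;}$. Concretely, write $\hat{C}=\hat{c}_1\cdots\hat{c}_{2n}$ and consider the cut between positions $k$ and $k+1$. Since the translation $S$ is a Dyck word, the suffix $S_{k+1}\cdots S_{2n}$ contains $d_k$ more closing brackets, i.e.\ creation operators, than opening brackets, i.e.\ annihilation operators. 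This holds because the whole word is balanced and the prefix has $d_k$ more openings than closings.

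The first step is to use the fact, already implicit in the proofs of Proposition \ref{prop:criteria_nullity_EVC_2nd}, that each creation operator maps $\mathcal{F}_m$ to $\mathcal{F}_{m+1}$ and each annihilation operator maps $\mathcal{F}_m$ to $\mathcal{F}_{m-1}$. Hence $\hat{c}_{k+1}\cdots\hat{c}_{2n}\ket{\;}$ is either the zero state or a (signed) one-determinant state with exactly $d_k$ electrons, all described by spin-orbitals of $\mathcal{B}$. By Proposition \ref{prop:depthdycknonnegative} these counts are non-negative, so the bookkeeping is consistent.

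The second step supposes that $d_k>L$ for some $k$. A one-determinant state built from elements of $\mathcal{B}$ with more than $L$ electrons would contain some spin-orbital twice, which is impossible. Make this precise by taking the first time, reading the operators right to left, that the particle count exceeds $L$. This happens through a creation operator applied to a state of $\mathcal{F}_L$ (or to the zero state), so \eqref{eq:zero_creation_FL} gives zero; equivalently, \eqref{eq:zero_creation} applies because all $L$ spin-orbitals are already occupied. Since the count changes by $\pm1$ at each step and starts at $0$, reaching $d_k>L$ forces such a crossing at value $L$. Once the state is zero it stays zero under further operators, so $\hat{C}\ket{\;}$ is zero and $\braket{\hat{C}}=0$.

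The only delicate point is the first step: being careful that depth read left to right corresponds to the suffix acting on the ket. I would state explicitly that $d_k=\#\text{creations}-\#\text{annihilations}$ in the suffix, using criterion 2 of Definition \ref{def:dyck_word}. Everything else is a direct consequence of the Pauli-type rules \eqref{eq:zero_creation} and \eqref{eq:zero_creation_FL}.
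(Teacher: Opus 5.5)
Your proposal is correct and follows essentially the same route as the paper's proof. You identify $d_k$ with the excess of creation over annihilation operators in the suffix acting on $\ket{\;}$, and conclude via \eqref{eq:zero_creation_FL} that some creation operator must act on an element of $\mathcal{F}_L$; your first-crossing argument (the particle count changes by $\pm 1$, so it must go from $L$ to $L+1$) spells out a step the paper only asserts.
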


\begin{proof}
{Let $S$ be the $\mathcal{L}_1$-translation of $\hat{C}$. In the hypotheses of the proposition, $S$ is a Dyck word. 
Assume that the $k^\text{th}$ position in $S$ has a depth $d_k$ strictly greater than $L$. Let us consider the $M$ operators on the right of the $k^\text{th}$ operator of $\hat{C}$ read from the left to the right. We denote this subchain by $\hat{C}_M$, and apply $\hat{C}_M$ to $\ket{\;}$. 
According to definitions \ref{def:dyck_word}, \ref{def:depth}, and \ref{def:L1} among $\hat{C}_M$ there are $d_k$ more creation operators than annihilation operators.
If any of the two situations described in \eqref{eq:zero_creation} or \eqref{eq:zero_annihilation} is met, the $\braket{\hat{C}}$ expectation value is readily zero. Otherwise, we expect to find $d_k$ spin-orbitals involved in $\hat{C}_M \ket{\;}$. In that case, $d_k$ is greater than the total number of spin-orbitals in $\mathcal{B}$.
As a consequence, at a certain point in the application of $\hat{C}_M$ to $\ket{\;}$, a creation operator has been applied to an element of $\mathcal{F}_L$, leading to $\ket{0_L}$ according to \eqref{eq:zero_creation_FL}, and to a value for $\braket{\hat{C}}$ that is zero.}
\end{proof}

\begin{example}
Consider a $2$-dimensional orthonormal basis, namely $(\varphi_1, \varphi_2)$. We also consider $\hat{C}$, the $\hat{a}_1\hat{a}_2\hat{a}_1\hat{a}_1^\dag\hat{a}_1^\dag\hat{a}_2^\dag$ chain of second quantization operators defined in that basis. The $\mathcal{L}_1$-translation of the $\hat{C}$ is $\bm{(\;(\;(\;)\;)\;)}$. It is a Dyck word, whose depth sequence is $(0,1,2,3,2,1,0)$. The maximal depth reached in the translation is equal to 3, which is strictly greater than the size of the basis. Therefore, according to {\normalfont Proposition \ref{prop:criteria_depth_physical}} the expectation value of $\hat{C}$ relatively to the physical vacuum is equal to zero.
\end{example}

\noindent A similar nullity criterion exists for expectation values relatively to the Fermi vacuum, as reported in Proposition \ref{prop:criteria_depth_fermi}.

\begin{proposition}\label{prop:criteria_depth_fermi}
Let $\hat{C}$ be a chain of second quantization operators whose $\mathcal{L}_2$-split translation $S_o\&S_v$ is a pair of Dyck words.
The expectation value of $\hat{C}$ relatively to the Fermi vacuum is equal to zero if {\normalfont{(i)}} a position in $S_o$ has a depth strictly greater than $N$ or if {\normalfont{(ii)}} a position in $S_v$ has a depth strictly greater than $(L-N)$.
\end{proposition}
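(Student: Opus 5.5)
The plan is to reduce the statement to the two separated subchains $\hat{C}_o$ and $\hat{C}_v$ from the proof of Proposition \ref{prop:criteria_nullity_EVC_fermi_2nd}, and then run the counting argument of Proposition \ref{prop:criteria_depth_physical} on each one separately. Reordering $\hat{C}$ into $\hat{C}_o$ (respectively $\hat{C}_v$) changes $\braket{\hat{C}}_{\Psi_0}$ at most by a sign. So it is enough to show that $\braket{\hat{C}_o}_{\Psi_0}=0$ in case (i), and that $\braket{\hat{C}_v}_{\Psi_0}=0$ in case (ii). By Definition \ref{def:splitting}, $S_o$ is exactly the $\mathcal{L}_2$ translation of the string of \textit{o}-operators appearing in $\hat{C}_o$, in the same order, and likewise $S_v$ for the \textit{v}-operators.

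\textbf{Case (ii).} In $\hat{C}_v$ the \textit{v}-operators act first on $\ket{\Psi_0}$. Suppose a position $k$ in $S_v$ has depth $d_k>L-N$. Then among the \textit{v}-operators to the right of that position there are $d_k$ more \textit{v}-creation operators ($\bm{)}$) than \textit{v}-annihilation operators ($\bm{(}$). This uses that $S_v$ is a Dyck word: its total count is balanced, so the prefix excess of $\bm{(}$ equals the suffix excess of $\bm{)}$. Apply this suffix to $\ket{\Psi_0}$. If we never meet \eqref{eq:zero_creation} or \eqref{eq:zero_annihilation}, the resulting determinant must contain $d_k$ virtual spin-orbitals in addition to the unchanged occupied ones. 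That is impossible, since only $L-N$ virtual spin-orbitals exist. Hence at some step a \textit{v}-creation operator hits an already-present virtual spin-orbital, and the state becomes zero by \eqref{eq:zero_creation}.

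\textbf{Case (i).} Now use $\hat{C}_o$, in which the \textit{o}-operators act first on $\ket{\Psi_0}$. A depth $d_k>N$ in $S_o$ means that the suffix to the right of position $k$ contains $d_k$ more \textit{o}-annihilations ($\bm{]}$) than \textit{o}-creations ($\bm{[}$), again by the balance of the Dyck word. Starting from the $N$ occupied spin-orbitals, this would leave $N-d_k<0$ of them in the determinant. So at some point an \textit{o}-annihilation operator acts on a determinant that no longer contains its target, and the state vanishes by \eqref{eq:zero_annihilation}. This is the mirror of the argument for P1 in Proposition \ref{prop:criteria_nullity_EVC_2nd}, with $N$ occupied orbitals playing the role of the available "room".

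The main point needing care is the orientation. The depth is read left to right, whereas the operators act right to left, so a large prefix depth must be converted into a large suffix excess of the opposite bracket type. This conversion is exactly where the Dyck hypothesis on $S_o$ and $S_v$ is used, through criterion 2 of Definition \ref{def:dyck_word}. I also have to check the sign conventions: $\bm{[}$ is \textit{o}-creation and $\bm{(}$ is \textit{v}-annihilation. Once the bookkeeping is settled, both cases reduce to pigeonhole counting, following the proof of Proposition \ref{prop:criteria_depth_physical}.
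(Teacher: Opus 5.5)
Your proposal is correct and follows the paper's proof essentially step for step: you reorder into $\hat{C}_o$ or $\hat{C}_v$ at the cost of a sign, use the balance of the Dyck word to turn a prefix depth $d_k$ into a suffix excess of $d_k$ \textit{o}-annihilations (case (i)) or \textit{v}-creations (case (ii)), and apply that suffix to $\ket{\Psi_0}$ to reach an impossible orbital count unless \eqref{eq:zero_creation} or \eqref{eq:zero_annihilation} has been triggered. The only quibble is that your ``hence'' sentences name a specific culprit (a \textit{v}-creation in case (ii), an \textit{o}-annihilation in case (i)); the zero can come from either rule, but the counting argument only needs that some zero occurs.
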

\begin{proof}

Let $M_o$ (respectively, $M_v$) denote the number of \textit{o}-operators (respectively, \textit{v}-operators) in $\hat{C}$. The operators in $\hat{C}$ can be reordered in two ways in order to provide two new chains, namely $\hat{C}_o$ and $\hat{C}_v$. The former is structured as follows: reading it from the right to the left, all the \textit{o}-operators are first met in the same order as in $\hat{C}$; all the \textit{v-}operators are then met, in the same order as in $\hat{C}$. The latter, i.e., $\hat{C}_v$, is constructed in a similar fashion: all the \textit{v}-operators are first met in the same order as in $\hat{C}$, before the \textit{o}-operators are met, in the same order as in $\hat{C}$. The only effect of such a reorganization of the operators can be a change in the sign of the expectation value. Therefore, if the expectation value of $\hat{C}_o$ (respectively, $\hat{C}_v$) relatively to the Fermi vacuum is equal to zero, so is $\braket{\hat{C}}_{\Psi_0}$.

(i) Assume that the $k_o^\text{th}$ position in $S_o$ has a depth $d_{k_o}$ strictly greater than $N$. Let us consider the $m_o$ operators on the right of the ${(M_v+k_o)}^\text{th}$ operator of $\hat{C}_o$ read from the left to the right. We denote this subchain by $\hat{C}_{m_o}$, and apply $\hat{C}_{m_o}$ to $\ket{\Psi_0}$. 
According to definitions \ref{def:dyck_word}, \ref{def:depth}, and \ref{def:L2} among $\hat{C}_{m_o}$ there are $d_{k_o}$ more \textit{o-}annihilation operators than \textit{o-}creation operators.
If any of the two situations described in \eqref{eq:zero_creation} or \eqref{eq:zero_annihilation} is met, the $\braket{\hat{C}_o}_{\Psi_0}$ expectation value is zero. Otherwise, we expect to find $(N-d_{k_o})$ occupied spin-orbitals involved in $\hat{C}_{m_o} \ket{\Psi_0}$. However, $(N-d_{k_o})$ is strictly negative.
This implies that at a certain point in the application of $\hat{C}_{m_o}$ to $\ket{\Psi_0}$ an \textit{o-}annihilation has been applied to the physical vacuum state, leading to $\hat{C}_{m_o}\ket{\Psi_0}$ being equal to $\ket{0_0}$. Therefore, the $\braket{\hat{C}_o}_{\Psi_0}$ expectation value is equal to zero. Hence, $\braket{\hat{C}}_{\Psi_0}$ is equal to zero.

(ii) Assume that the $k_v^\text{th}$ position in $S_v$ has a depth $d_{k_v}$ strictly greater than $(L-N)$. Let us consider the $m_v$ operators on the right of the $(M_o+{k_v})^\text{th}$ operator of $\hat{C}_v$ read from the left to the right. We denote this subchain by $\hat{C}_{m_v}$, and apply $\hat{C}_{m_v}$ to $\ket{\Psi_0}$. 
According to definitions \ref{def:dyck_word}, \ref{def:depth}, and \ref{def:L2} among $\hat{C}_{m_v}$ there are $d_{k_v}$ more \textit{v-}creation operators than \textit{v-}annihilation operators.
If any of the two situations described in \eqref{eq:zero_creation} or \eqref{eq:zero_annihilation} is met, the $\braket{\hat{C}_v}_{\Psi_0}$ expectation value is zero. Otherwise, we expect to find $d_{k_v}$ virtual spin-orbitals involved in $\hat{C}_{m_v} \ket{\Psi_0}$. In that case, $d_{k_v}$ is greater than the total number of virtual spin-orbitals — i.e., $(L-N)$.
This implies that at a certain point in the application of $\hat{C}_{m_v}$ to $\ket{\Psi_0}$ a \textit{v-}creation has been applied to a determinant state already containing all virtual spin-orbitals, leading to $\hat{C}_{m_v}\ket{\Psi_0}$ being equal to $\ket{0_{L}}$. Therefore, the $\braket{\hat{C}_v}_{\Psi_0}$ expectation value is equal to zero. Hence, $\braket{\hat{C}}_{\Psi_0}$ is equal to zero.
\end{proof}

\begin{example}
Consider the case where $(L=3)$ and $(N=2)$. We have a $3$-dimensional orthonormal basis, namely $(\varphi_1, \varphi_2, \varphi_3)$, and our Fermi vacuum, $\Psi_0$, is simply $(\varphi_1 \wedge \varphi_2)$. We also consider $\hat{C}$, the 
\begin{equation*}
\hat{o}_1^\dag \hat{v}_3 \hat{v}_3\hat{v}_3^\dag\hat{v}_3^\dag\hat{o}_1
\end{equation*}
chain of second quantization operators defined in that basis. The $\mathcal{L}_2$-translation of the $\hat{C}$ is $\bm{(\;)}\&\bm{(\;(\;)\;)}$. It is a pair of Dyck words. The depth sequence for virtual operators is $(0,1,2,1,0)$. The depth sequence for occupied operators is $(0,1,0)$.  

For the occupied spin-orbitals, the nullity criterion is not met, but for virtual spin-orbitals, the maximal depth reached in the translation is equal to 2. This maximal depth is striclty greater than $(L-N)$. 

Therefore, according to {\normalfont Proposition \ref{prop:criteria_depth_fermi}} the expectation value of $\hat{C}$ relatively to the Fermi vacuum is equal to zero.
\end{example}

\begin{remark} These criteria can be extended for any partition of $\mathcal{B}$: Let the $\mathcal{B}$ ordered basis be partitioned into $i$ parts, namely $(\mathcal{B}_1,\ldots,\mathcal{B}_i)$. For every $k$ in $\llbracket 1, i\rrbracket$, the $k^\text{th}$ part contains $n_k$ spin-orbitals. Picking up a number, $\ell$, between zero and $i$, we can define ``$\ell$--reference'' states relatively to the partition as the states in which all the spin-orbitals of exactly $\ell$ parts of $(\mathcal{B}_1,\ldots,\mathcal{B}_i)$ are present exactly once.

For a given $\ell$--reference, consider a chain $\hat{C}$ of second quantization operators and its $\mathcal{L}_i$-split translation $S_1\&\cdots\&S_i$. If the $k^\text{th}$ word in $S_1\&\cdots\&S_i$ reaches a depth greater than $n_k$, the expectation value of $\hat{C}$ relatively to the chosen $\ell$--reference is equal to zero. 
\end{remark}

\subsection{Summary}

\noindent In this paper we have introduced multiple alphabets and five main translation rules for rewriting chains of second quantization operators. Table \ref{tab:summarytranslations} summarizes these constructions and properties, and shows them in action with chosen examples.
\begin{table}[h!]
\def\arraystretch{1.5}
\begin{center}
\begin{tabular}{ccccc}
\hline 
Name & Alphabet & Definition & Proposition & Example \\ 
\hline 
$\mathcal{L}_1$             & \{ $\bm{)}$, $\bm{(}$ \} & \ref{def:L1} & \ref{prop:nullity_L1} & $\hat{a}_p\hat{a}_q^\dag\hat{a}_r\hat{a}_s^\dag \longrightarrow_{\mathcal{L}_1} \bm{(\,)\,(\,)}$ \\ 
$\mathcal{L}_{2,\mathcal{S}}$ & \{ $\bm{)}$, $\bm{(}$, $\bm{]}$, $\bm{[}$ \} & \ref{def:L2} & \ref{prop:nullity_translation_Fermi} & $\hat{o}_i^\dag\hat{v}_a\hat{v}_b^\dag\hat{o}_j \longrightarrow_{\mathcal{L}_{2,\mathcal{S}}} \bm{[\,]\,\&\,(\,)}$ \\ 
Fermi-$\mathcal{L}_{1}$     & \{ $\bm{)}$, $\bm{(}$ \} & \ref{def:F_L1} & \ref{prop:nullity_F_L1} & $\hat{o}_i^\dag\hat{v}_a\hat{v}_b^\dag\hat{o}_j = \hat{D}_a^i\hat{E}_j^b\stackrel{\mathcal{F}}{\longrightarrow}_{\mathcal{L}_{1}} \bm{(\,)}$\\ 
Fermi-$\mathcal{L}_{1,1}$     & \{ $\bm{)}$, $\bm{(}$, $\bm{-}$ \} & \ref{def:F_L1-extend} & \ref{prop:nullity_F_L1_extend} & $\hat{o}_i^\dag\hat{v}_a\hat{a}_r^\dag\hat{a}_s\hat{v}_b^\dag\hat{o}_j = \hat{D}_a^i\hat{a}_r^\dag\hat{a}_s\hat{E}_j^b\stackrel{\mathcal{F}}{\longrightarrow}_{\mathcal{L}_{1,1}} \bm{(\,-\,)}$ \\ 
Augmented-$\mathcal{L}_1$ & \{ $\bm{)}$, $\bm{(}$ \} & \ref{ex:A_L1} & \ref{prop:nullity_A_L1} & $\hat{p}_p\hat{p}_q\hat{p}^\dag_s\hat{p}^\dag_r \longrightarrow_{A-\mathcal{L}_{1}} \bm{(\,(\,)\,)}$\\
\hline 
\end{tabular} 
\caption{The five main translations introduced in this paper with the corresponding alphabet, together with a reference to their definition, and a reference to the proposition where nullity criteria are stated using the translation of interest, and an illustrative example.}
\label{tab:summarytranslations}
\end{center}
\end{table}
\section{Perspectives}

\subsection{Dyck paths}
\begin{definition}[Dyck path]\label{def:dyck_path}
A Dyck path of length $2n$ is a lattice path, starting at the origin $(0,0)$ and ending at $(2n,0)$, which never drops below the $x$-axis (materialized by a dashed horizontal line below). Each step is either an upward step $(1,1)$ or a downward step $(1,-1)$.
\end{definition}

\begin{example}\label{ex:DyckPaths_from_DyckWords}
All Dyck paths of length 6, \textit{i.e.}, using 3 upward steps and 3 downward steps, are:
\begin{center}\begin{normalfont}
\begin{tikzpicture}[scale=0.6]
\draw[dashed] (0,0)--(6,0);
\draw[black] (0,0)node{•}-- ++(1,1)node{•}-- ++(1,-1)node{•}-- ++(1,1)node{•}-- ++(1,-1)node{•}-- ++(1,1)node{•}-- ++(1,-1)node{•};

\draw[dashed] (7,0)--(13,0);
\draw[black] (7,0)node{•}-- ++(1,1)node{•}-- ++(1,1)node{•}-- ++(1,-1)node{•}-- ++(1,-1)node{•}-- ++(1,1)node{•}-- ++(1,-1)node{•};

\draw[dashed] (14,0)--(20,0);
\draw[black] (14,0)node{•}-- ++(1,1)node{•}-- ++(1,-1)node{•}-- ++(1,1)node{•}-- ++(1,1)node{•}-- ++(1,-1)node{•}-- ++(1,-1)node{•} ;

\draw[dashed] (3,-4)--(9,-4);
\draw[black] (3,-4)node{•}-- ++(1,1)node{•}-- ++(1,1)node{•}-- ++(1,-1)node{•}-- ++(1,1)node{•}-- ++(1,-1)node{•}-- ++(1,-1)node{•} ;

\draw[dashed] (11,-4)--(17,-4);
\draw[black] (11,-4)node{•}-- ++(1,1)node{•}-- ++(1,1)node{•}-- ++(1,1)node{•}-- ++(1,-1)node{•}-- ++(1,-1)node{•}-- ++(1,-1)node{•} ;
\end{tikzpicture}
\end{normalfont}\end{center}
\end{example}

\noindent To every Dyck path corresponds a unique Dyck word. Indeed, during a left-to-right reading of a Dyck word, if to each opening (respectively, closing) bracket in a Dyck word is associated an upward step (respectively, downward step) the resulting path is a Dyck path. There is a one-to-one correspondence between the five Dyck paths in Example \ref{ex:DyckPaths_from_DyckWords} and the five Dyck words in Example \ref{ex:dyck_word}.

A direct consequence of this bijection is that for every translation from a chain of second quantization operators to a string of brackets there is an equivalent translation to a path. 
From each translation into path, nullity criteria of the expectation value can be found depending on the reference state or the type — occupied, virtual, arbitrary — of operator that is implied. As a consequence, each translation reported in Table \ref{tab:summarytranslations} has its equivalent in the Dyck path diagrammatic language. For the sake of brevity, solely the equivalent of the Fermi-$\mathcal{L}_1$ translation will be explicitly developed below.
\begin{definition}[Path Fermi-$\mathcal{L}_1$ translation]\label{def:path_F_L1}
Path Fermi-$\mathcal{L}_1$ translating a chain of (de)excitation operators consists in initializing the path in $(0,0)$ and, reading the chain from the left to right adding an upward (respectively, downard) step to the path at each deexcitation (respectively, excitation) operator.
\end{definition}

\begin{proposition}\label{prop:path_nullity_F_L1}
The expectation value of a chain of (de)excitation operators relatively to the Fermi vacuum is zero if its path Fermi-$\mathcal{L}_1$ translation is not a Dyck path.
\end{proposition}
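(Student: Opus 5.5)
The plan is to reduce Proposition \ref{prop:path_nullity_F_L1} to Proposition \ref{prop:nullity_F_L1} through the word--path correspondence recalled just before Definition \ref{def:path_F_L1}. Let $\hat{C}$ be a chain of (de)excitation operators. I will write $S$ for its Fermi-$\mathcal{L}_1$ translation (Definition \ref{def:F_L1}) and $P$ for its path Fermi-$\mathcal{L}_1$ translation (Definition \ref{def:path_F_L1}). The first step is to observe that $P$ is exactly the lattice path obtained from $S$ by the left-to-right reading rule: each $\bm{(}$ gives an upward step and each $\bm{)}$ a downward step. This holds because both translations read $\hat{C}$ from left to right and send deexcitations to $\bm{(}$ (resp.\ upward steps) and excitations to $\bm{)}$ (resp.\ downward steps).

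The second step is to show that this map sends a string that is not a Dyck word to a path that is not a Dyck path. I will prove the contrapositive: if $P$ is a Dyck path, then $S$ is a Dyck word. The height of $P$ after $k$ steps equals the depth $d_k$ of $S$ in the sense of Definition \ref{def:depth}.
\begin{itemize}
\item[(a)] $P$ ends at height zero, so $S$ has equally many opening and closing brackets. This is criterion 2 of Definition \ref{def:dyck_word}.
\item[(b)] $P$ never drops below the axis, so $d_k\geq 0$ for all $k$. For a suffix starting after position $k$, the number of closing brackets minus the number of opening brackets is $d_k-d_{2n}=d_k\geq 0$. This is criterion 1 of Definition \ref{def:dyck_word}, read right to left.
\end{itemize}

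Combining the two steps: if $P$ is not a Dyck path, then $S$ is not a Dyck word, and Proposition \ref{prop:nullity_F_L1} gives $\braket{\hat{C}}_{\Psi_0}=0$.

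The only delicate point is (b). Definition \ref{def:dyck_word} is phrased as a right-to-left suffix condition, whereas Definition \ref{def:dyck_path} is a left-to-right prefix condition, so the argument must pass from nonnegative prefix heights to nonnegative suffix excess. This uses the total balance from (a). It is the converse of the argument in Proposition \ref{prop:depthdycknonnegative}, and it is routine once stated. A path of odd length simply cannot end at height zero, so it falls under (a) and needs no separate treatment.
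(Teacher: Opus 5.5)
Your argument is correct and uses the same reduction as the paper. The paper's proof simply asserts that a path Fermi-$\mathcal{L}_1$ translation that is not a Dyck path comes from a Fermi-$\mathcal{L}_1$ translation that is not a Dyck word, then invokes Proposition~\ref{prop:nullity_F_L1}; your items (a) and (b) supply the word--path verification it leaves implicit, with ``depth'' read as the running bracket count since Definition~\ref{def:depth} is formally stated only for Dyck words.
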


\begin{proof}
If the path Fermi-$\mathcal{L}_1$ translation of a chain is not a Dyck path, its Fermi-$\mathcal{L}_1$ translation is not a Dyck word. Therefore, in virtue of Proposition \ref{prop:nullity_F_L1}, its expectation value relatively to the Fermi vacuum is equal to zero.
\end{proof}

\begin{example}\label{ex:FermiDyckPath}
Let $(i_k)_{1\leq k\leq 6}$ be a 6-tuple of integers, all belonging to $\llbracket 1,N\rrbracket$. 
Let $(a_k)_{1\leq k\leq 6}$ be a 6-tuple of integers, all belonging to $\llbracket N+1,L\rrbracket$. Consider the $\hat{C}_1 := \hat{D}_{a_1}^{i_1}\hat{D}_{a_2}^{i_2}\hat{E}_{i_3}^{a_3}\hat{D}_{a_4}^{i_4}\hat{E}_{i_5}^{a_5}\hat{E}_{i_6}^{a_6}$ and $\hat{C}_2 := \hat{D}_{a_1}^{i_1}\hat{E}_{i_2}^{a_2}\hat{E}_{i_3}^{a_3}\hat{D}_{a_4}^{i_4}\hat{D}_{a_5}^{i_5}\hat{E}_{i_6}^{a_6}$ chains. Their respective Fermi-$\mathcal{L}_1$ translation and their path Fermi-$\mathcal{L}_1$ translation are illustrated in {\normalfont Figure \ref{fig:dyck_path_representation}}.

\begin{figure}[h!]\centering 
\begin{normalfont}
\begin{tikzpicture}[scale=0.8]
\draw[dashed] (0,0)--(-6,0);
\draw(0,0)node{•}-- ++(-1,1)node{•} -- ++(-1,1)node{•} -- ++(-1,-1)node{•} -- ++(-1,1)node{•} -- ++(-1,-1)node{•}-- ++(-1,-1)node{•};

\def\x{-.5}
\draw (-5.5,\x) node{$\hat{D}_{a_1}^{i_1}$}; \draw (-4.5,\x) node{$\hat{D}_{a_2}^{i_2}$}; 
\draw (-3.5,\x) node{$\hat{E}_{i_3}^{a_3}$}; \draw (-2.5,\x) node{$\hat{D}_{a_4}^{i_4}$}; 
\draw (-1.5,\x) node{$\hat{E}_{i_5}^{a_5}$}; \draw (-0.5,\x) node{$\hat{E}_{i_6}^{a_6}$};

\def\x{-1.5}
\draw (-5.5,\x) node{$\bm{(}$}; \draw (-4.5,\x) node{$\bm{(}$}; 
\draw (-3.5,\x) node{$\bm{)}$}; \draw (-2.5,\x) node{$\bm{(}$}; 
\draw (-1.5,\x) node{$\bm{)}$}; \draw (-0.5,\x) node{$\bm{)}$};

\tikzset{shift={(8,0)}}

\draw[dashed] (0,0)--(-6,0);
\draw(0,0)node{•}-- ++(-1,1)node{•} -- ++(-1,-1)node{•} -- ++(-1,-1)node{•} -- ++(-1,1)node{•} -- ++(-1,1)node{•}-- ++(-1,-1)node{•};

\def\x{-1.5}
\draw (-5.5,\x) node{$\hat{D}_{a_1}^{i_1}$}; \draw (-4.5,\x) node{$\hat{E}_{i_2}^{a_2}$}; 
\draw (-3.5,\x) node{$\hat{E}_{i_3}^{a_3}$}; \draw (-2.5,\x) node{$\hat{D}_{a_4}^{i_4}$}; 
\draw (-1.5,\x) node{$\hat{D}_{a_5}^{i_5}$}; \draw (-0.5,\x) node{$\hat{E}_{i_6}^{a_6}$};

\def\x{-2.5}
\draw (-5.5,\x) node{$\bm{(}$}; \draw (-4.5,\x) node{$\bm{)}$}; 
\draw (-3.5,\x) node{$\bm{)}$}; \draw (-2.5,\x) node{$\bm{(}$}; 
\draw (-1.5,\x) node{$\bm{(}$}; \draw (-0.5,\x) node{$\bm{)}$};
\end{tikzpicture}
\end{normalfont}
\caption{(Left) Path Fermi-$\mathcal{L}_1$ translation and Fermi-$\mathcal{L}_1$ translation of $\hat{C}_1$ from Example \ref{ex:FermiDyckPath}. (Right) path Fermi-$\mathcal{L}_1$ translation and Fermi-$\mathcal{L}_1$ translation of $\hat{C}_2$ from Example \ref{ex:FermiDyckPath}. For the sake of clarity, each operator is placed below its corresponding step.}
\label{fig:dyck_path_representation}
\end{figure}

\noindent The path corresponding to $\hat{C}_2$ crosses the x-axis, its expectation value relatively to the Fermi vacuum is equal to zero. On the other hand, the path corresponding to $\hat{C}_1$ is a Dyck path. Therefore, its expectation value relatively to the Fermi vacuum can be non-zero.
\end{example}

\noindent More than just adding a new possible graphical representation of chains of second quantization operators, the path representation is more adapted to chains containing an odd number of operators.

\begin{proposition}\label{prop:dyck_path_nullity}
Consider a chain, $\hat{C}$, of (de)excitation operators. Let $P$ be its path Fermi-$\mathcal{L}_1$ translation. Let $h$ be the height of the last node.
The transition value $\bra{\Psi_0}\hat{C}\ket{\Psi_h}$, where $\Psi_h$ is an $h$--tuply excited determinant relatively to the Fermi vacuum, is zero if any node in $P$ drops below the $x$-axis or if the height of the final node is different from $h$.
\end{proposition}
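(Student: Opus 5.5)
The plan is to reduce the statement to Proposition \ref{prop:path_nullity_F_L1} (equivalently Proposition \ref{prop:nullity_F_L1}) by absorbing the excited ket into the chain. Write $\Psi_h$ as an $h$--tuply excited determinant, so that up to a sign $\ket{\Psi_h} = \hat{E}_{i_1}^{a_1}\cdots\hat{E}_{i_h}^{a_h}\ket{\Psi_0}$ with distinct occupied indices $i_1,\ldots,i_h$ and distinct virtual indices $a_1,\ldots,a_h$. Then $\bra{\Psi_0}\hat{C}\ket{\Psi_h} = \pm\braket{\hat{C}\hat{E}_{i_1}^{a_1}\cdots\hat{E}_{i_h}^{a_h}}_{\Psi_0}$, and $\hat{C}' \coloneqq \hat{C}\hat{E}_{i_1}^{a_1}\cdots\hat{E}_{i_h}^{a_h}$ is again a chain of (de)excitation operators. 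Its path Fermi-$\mathcal{L}_1$ translation is $P$ followed by $h$ downward steps. The ``$h$'' in the statement is therefore used twice: once as the final height of $P$, and once as the excitation rank of $\Psi_h$. The criterion compares the two.

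There are two cases. \emph{Case A}: some node of $P$ drops below the $x$-axis. Then the path of $\hat{C}'$, which begins with $P$, also drops below the axis, so it is not a Dyck path. By Proposition \ref{prop:path_nullity_F_L1} the expectation value of $\hat{C}'$ vanishes, and hence so does the transition value.

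\emph{Case B}: $P$ stays nonnegative, but its final height $h'$ differs from the excitation rank $h$. The path of $\hat{C}'$ ends at height $h'-h\neq 0$, so it violates criterion 2 of Definition \ref{def:dyck_word} and is not a Dyck path. Proposition \ref{prop:path_nullity_F_L1} again gives zero. A direct count gives the same conclusion: $\hat{C}$ changes the excitation rank by exactly $h'$ when it does not give zero, so $\hat{C}\ket{\Psi_h}$ has excitation rank $h-h'\neq 0$. It is therefore orthogonal to $\Psi_0$ by \eqref{eq:orthogonalite_SD}.

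The main obstacle is a notational inconsistency in the statement. Letter $h$ is defined as the final height of $P$, yet the statement also asks whether that height ``is different from $h$''. I will make the intended reading explicit: the final height is $h'$, and the excitation rank of the ket is $h$. A secondary point is the sign and ordering convention for writing $\Psi_h$ through excitation operators. Only nullity matters, so the sign is irrelevant, and the excitation operators with distinct indices anticommute pairwise up to sign. No depth arguments are needed.
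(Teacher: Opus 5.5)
Your proposal is correct and follows the paper's proof: write $\ket{\Psi_h}$ as a product of $h$ excitation operators acting on $\ket{\Psi_0}$, absorb them into the chain so that its path becomes $P$ followed by $h$ downward steps, and invoke Proposition~\ref{prop:path_nullity_F_L1}. Your proof goes further than the paper's in two respects: you explicitly handle the case where the final height of $P$ differs from the excitation rank of the ket, whereas the paper only asserts that the extended path ends on the $x$-axis; and you separate the two roles played by the letter $h$.
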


\begin{proof}
$\Psi_h$ being an $h$--tuply excited determinant relatively to the Fermi vacuum, there exists an $h$-tuple $(i_k)_{1\leq k\leq h}$ of integers, all belonging to $\llbracket 1,N\rrbracket$, and an $h$-tuple $(a_k)_{1\leq k\leq h}$ of integers, all belonging to $\llbracket N+1,L\rrbracket$, such that  $\ket{\Psi_h} =\hat{E}_{i_1}^{a_1}\cdots\hat{E}_{i_h}^{a_h}  \ket{\Psi_0}$. The $\bra{\Psi_0}\hat{C}\ket{\Psi_h}$ transition value is equal to the $\bra{\Psi_0}\hat{C}\hat{E}_{i_1}^{a_1}\cdots\hat{E}_{i_h}^{a_h}\ket{\Psi_0}$ expectation value.
The path Fermi-$\mathcal{L}_1$ translation of $\hat{C}\hat{E}_{i_1}^{a_1}\cdots\hat{E}_{i_h}^{a_h}$ starts and finishes on the $x$-axis. If a node drops below the $x$-axis, the path Fermi-$\mathcal{L}_1$ translation is not a Dyck path — in which case, according to {Proposition \ref{prop:path_nullity_F_L1}} the $\bra{\Psi_0}\hat{C}\hat{E}_{i_1}^{a_1}\cdots\hat{E}_{i_h}^{a_h}\ket{\Psi_0}$ expectation value (hence, the $\bra{\Psi_0}\hat{C}\ket{\Psi_h}$ transition value) is equal to zero.
\end{proof}

\begin{example}
Let $(i,j,k)$ be a 3-tuple of integers, all belonging to $\llbracket 1,N\rrbracket$. 
Let $(a,b,c)$ be a 3-tuple of integers, all belonging to $\llbracket N+1,L\rrbracket$. Consider the $\hat{D}_{a}^{i}\hat{D}_{b}^{j}\hat{E}_{k}^{c}$ chain. Its path Fermi-$\mathcal{L}_1$ translation and Fermi-$\mathcal{L}_1$ translation are illustrated in {\normalfont Figure \ref{fig:dyck_path_representation_odd}}. The path Fermi-$\mathcal{L}_1$ translation of the chain is not a Dyck path. Indeed, the final node of the path reaches a height of 1. According to {\normalfont Proposition \ref{prop:dyck_path_nullity}}, the $\bra{\Psi_0}\hat{D}_{a}^{i}\hat{D}_{b}^{j}\hat{E}_{k}^{c}\ket{\Psi_1}$ transition value, with $\Psi_1$ being a one-determinant state, singly excited relatively to the Fermi vacuum, can be non-zero.
\begin{figure}[h!]\centering 
\begin{normalfont}
\begin{tikzpicture}[scale=0.8]
\draw[dashed] (0,0)--(3,0);
\draw(0,0)node{•}-- ++(1,1)node{•} -- ++(1,1)node{•}-- ++(1,-1)node{•};

\def\x{-.5}
\draw (0.5,\x) node{$\hat{D}_{a}^{i}$}; 
\draw (1.5,\x) node{$\hat{D}_{b}^{j}$}; 
\draw (2.5,\x) node{$\hat{E}_{k}^{c}$};

\end{tikzpicture}
\end{normalfont}
\caption{Path Fermi-$\mathcal{L}_1$ translation and Fermi-$\mathcal{L}_1$ translation of $\hat{D}_{a}^{i}\hat{E}_{j}^{b}\hat{E}_{k}^{c}$.}
\label{fig:dyck_path_representation_odd}
\end{figure}
\end{example}

\noindent We see from what precedes that one advantage of this representation, is that the number of missing (de)excitation(s) can be directly read on the translation.

\subsection{Bosonic second quantization}
Until now, only the fermionic second quantization has been considered. However, similar translation can be defined for bosonic second quantization. In this subsection we consider an $L$--dimensional orthonormal ordered basis, $\mathcal{B}_b = (\phi_\alpha)_{1\leq \alpha\leq L}$. Bosonic creation and annihilation second quantization operators defined relatively to $\mathcal{B}_b$ are labelled $\hat{b}_\alpha^\dag$ and $\hat{b}_\alpha$ respectively, with $\alpha$ belonging to $\llbracket 1,L\rrbracket$.

\begin{definition}[Bosonic $\mathcal{L}_1$ translation]\label{def:L1_boson} 
Bosonic $\mathcal{L}_1$-translating a chain of individual bosonic second quantization operators consists in replacing each annihilation operator by an opening bracket {\normalfont ``}$\bm{(}${\normalfont ''}, and each creation operator by a closing bracket {\normalfont ``}$\bm{)}${\normalfont ''}:
\begin{align*}
\forall \alpha \in \llbracket 1, L \rrbracket, \quad &\hat{b}_\alpha \stackrel{b\quad}{\longrightarrow}_{\mathcal{L}_1} \bm{(},\\ &\hat{b}^\dag_\alpha \stackrel{b\quad}{\longrightarrow}_{\mathcal{L}_1} \;\bm{)}.
\end{align*}
\end{definition}

\begin{proposition}\label{prop:nullity_L1_boson}
A chain of bosonic second quantization operators has a zero expectation value relatively to the physical vacuum if its bosonic $\mathcal{L}_1$--translation results into a string of bracket(s) that is not a Dyck word.
\end{proposition}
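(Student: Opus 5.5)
Since a string of brackets fails to be a Dyck word exactly when one of the two criteria of Definition \ref{def:dyck_word} fails, I would mirror Proposition \ref{prop:criteria_nullity_EVC_2nd} for bosons and then transport it through the translation as in Proposition \ref{prop:nullity_L1}. Under Definition \ref{def:L1_boson}, criterion 1 of Definition \ref{def:dyck_word} (reading right to left, no more ``$\bm{(}$'' than ``$\bm{)}$'') says that no right-to-left prefix of the chain contains more annihilation than creation operators. Criterion 2 says that the chain contains as many creation as annihilation operators. So it suffices to show two things: the vacuum expectation value vanishes if some right-to-left prefix has more annihilations than creations, and it vanishes if the total numbers differ.

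\emph{Step 1 (prefix criterion).} Take the shortest right-to-left prefix in which annihilations outnumber creations, say with $m_c$ creations and $m_a=m_c+1$ annihilations. I track particle number rather than occupation numbers, since bosonic occupations are unbounded and the fermionic Pauli-type zeros \eqref{eq:zero_creation} no longer apply. Each $\hat b^\dag_\alpha$ maps the $n$-boson sector into the $(n+1)$-boson sector. Each $\hat b_\alpha$ maps the $n$-boson sector into the $(n-1)$-boson sector, and sends the vacuum to zero. Acting on $\ket{\;}$, the prefix with its leftmost operator removed has balanced counts, so it produces a vector of particle number $m_c-(m_a-1)=0$, i.e.\ a multiple of $\ket{\;}$. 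The leftmost operator of the prefix must be an annihilation, by minimality of the prefix. It therefore kills this vector, because $\hat b_\alpha\ket{\;}=0$. The state is then zero and the expectation value vanishes.

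\emph{Step 2 (count criterion).} If the total counts differ but Step 1 does not already apply, then $M_c>M_a$. Then $\hat C\ket{\;}$ lies in the $(M_c-M_a)$-boson sector, which is orthogonal to the vacuum sector, so $\braket{\hat C}=0$. Here I would only need that Fock sectors with different particle numbers are orthogonal, the bosonic analogue of \eqref{eq:orthogonalite_SD}.

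The only genuine obstacle is that the fermionic proofs invoked specific vanishing rules, such as double occupation or $\mathcal F_L$ saturation. I would replace these uniformly by the particle-number grading argument above, which works identically for bosons because $\hat b_\alpha\ket{\;}=0$ and $\hat b^\dag_\alpha$, $\hat b_\alpha$ shift particle number by $\pm1$. With the two steps in hand, the conclusion follows exactly as in the proof of Proposition \ref{prop:nullity_L1}: a non-Dyck translation violates a criterion, and each violation forces a zero expectation value.
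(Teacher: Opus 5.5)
Your proof is correct and takes the same route as the paper, whose proof just says to rerun the argument of Proposition \ref{prop:criteria_nullity_EVC_2nd} for bosons and then use the correspondence with the Dyck criteria, as in Proposition \ref{prop:nullity_L1}. Your particle-number grading argument is that bosonic counting argument, spelled out more carefully than the paper does: the shortest violating right-to-left prefix is killed by $\hat b_\alpha\ket{\;}=0$, and when $M_c>M_a$ the state lies in a sector orthogonal to the vacuum.
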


\begin{proof}
Following the same methods as in the proof of Proposition \ref{prop:criteria_nullity_EVC_2nd}, we can prove that there is a one-to-one correspondence between the criteria defining Dyck words and criteria establishing the nullity of expectation values of bosonic second quantization operators chains relatively to the physical vacuum. 
\end{proof}

\begin{remark}
Contrary to the fermionic second quantization, occupation numbers in bosonic second quantization are not bound to zero and one. A direct consequence of that fact is the absence of nullity criteria based on depth, relatively to the physical vacuum, that would be related to an upper bound of individual states occupation number.
\end{remark}

\section{Conclusion}
We introduced and proved that there is a one-to-one correspondence between the criteria defining Dyck words and sufficient conditions establishing the nullity of expectation values of fermionic second quantization operators chains relatively to the physical vacuum or relatively to a one-determinant state. We formulated different translations of second quantization operators into bracket alphabets. For a given translation, sufficient conditions for the nullity of an expectation value of a chain are provided as simple and easy-to-check criteria on its translation. Depending on the reference state and/or the {type} of operators present in the chain, we specified which translation is relevant for revealing sufficient conditions for the nullity of the expectation value of the chain. Moreover, we introduced in second quantization formalism the depth number and the expulsion transformation.

\section*{Acknowledgements}

We would like to warmly thank Emmanuel Giner, Julien Toulouse, Saad Yalouz, Anthony Scemama, Pierre-François Loos, and colleagues from the LPCT for very fruitful discussions on the topic.

\bibliographystyle{unsrt}

\end{document}